\documentclass[journal]{IEEEtran}

\usepackage{amssymb,amsthm,pict2e,enumerate}
\usepackage{array}
\usepackage{booktabs, soul}
\usepackage{color}
\usepackage{bm}
\usepackage{amsfonts}
\usepackage{algorithm}
\usepackage{algorithmic}
\usepackage{amsmath,dsfont}
\usepackage{cite}
\usepackage{url}
\usepackage{acronym}

%\usepackage{clipboard}
%\openclipboard{myclipboard}
%\newclipboard{myclipboard}
\ifCLASSINFOpdf
\usepackage[pdftex]{graphicx}
\else
\usepackage[dvips]{graphicx}
\fi

\usepackage{epstopdf}

\ifCLASSOPTIONcompsoc
\usepackage[caption=false,font=normalsize,labelfont=sf,textfont=sf]{subfig}
\else
\usepackage[caption=false,font=footnotesize]{subfig}
\fi

% correct bad hyphenation here 
\newtheorem{Thm}{Theorem}

\newtheorem{lemma}[Thm]{Lemma}
\newtheorem{corollary}[Thm]{Corollary}

\newcommand{\mySet}[1]{\mathcal{#1}}
\newcommand{\Gain}{\alpha_K}

\newcommand{\Revise}[1]{{#1}}
\newcommand{\myfiguresize}{3}

\begin{document}
	
	\acrodef{snr}[SNR]{signal-to-noise ratio}
	\acrodef{mimo}[MIMO]{multiple-input multiple-output}
	\acrodef{mse}[MSE]{mean-squared error}
	\acrodef{pdf}[PDF]{probability density function}
	\acrodef{cs}[CS]{compressed sensing}
	\acrodef{wmar}[WMAR]{wideband multi-carrier agile radar}
	\acrodef{caesar}[CAESAR]{multi-Carrier AgilE phaSed Array Radar}
	\acrodef{icassp}[ICASSP]{International Conference on Acoustics, Speech, and Signal Processing}
	\acrodef{roc}[ROC]{receiver operating characteristic}
	\acrodef{awgn}[AWGN]{additive white Gaussian noise}
	\acrodef{dfrc}[DFRC]{dual function radar-communications}
	\acrodef{fmcw}[FMCW]{frequency-modulated continuous-wave}
	\acrodef{hrr}[HRR]{high range resolution}
	\acrodef{cpi}[CPI]{coherent processing interval}
	\acrodef{far}[FAR]{frequency agile radar}	
	\acrodef{rmse}[RMSE]{root mean squared error}	
	\acrodef{crc}[CRC]{coarse range cell}
	\acrodef{fda}[FDA]{frequency diversity array} 
	\acrodef{fdma}[FDMA]{frequency division multiple access} 
	\acrodef{fdmamimo}[FDMA-MIMO]{\ac{fdma} \ac{mimo}}
	\acrodef{summer}[SUMMeR]{sub-Nyquist \ac{mimo} radar}
	\acrodef{eccm}[ECCM]{electronic counter-countermeasures}
	\acrodef{emc}[EMC]{electromagnetic compatibility}

	%
	% paper title
	% Titles are generally capitalized except for words such as a, an, and, as,
	% at, but, by, for, in, nor, of, on, or, the, to and up, which are usually
	% not capitalized unless they are the first or last word of the title.
	% Linebreaks \\ can be used within to get better formatting as desired.
	% Do not put math or special symbols in the title.
	\title{Multi-Carrier Agile Phased Array Radar}
	%
	%
	% author names and IEEE memberships
	% note positions of commas and nonbreaking spaces ( ~ ) LaTeX will not break
	% a structure at a ~ so this keeps an author's name from being broken across
	% two lines.
	% use \thanks{} to gain access to the first footnote area
	% a separate \thanks must be used for each paragraph as LaTeX2e's \thanks
	% was not built to handle multiple paragraphs
	%
	
	\author{Tianyao Huang, Nir Shlezinger, Xingyu Xu, Dingyou Ma, Yimin Liu, and Yonina C. Eldar 
		
		\thanks{Parts of this work were presented in the 2018 IEEE International Workshop on Compressed Sensing applied to Radar, Multimodal Sensing, and Imaging (CoSeRa).
			%	T. Huang, Y. Liu, X. Xu,  and D. Ma are with the EE Department, Tsinghua University, Beijing, China.
			%	}
			%		\thanks{
			% Nir: in the review stage fund and email is not required, right?
			This work received funding from the National Natural Science Foundation of China under Grants 61571260 and 61801258, from the European Union’s Horizon 2020 research and innovation program under grant No. 646804-ERC-COG-BNYQ, and from the Air Force Office of Scientific Research under grant No. FA9550-18-1-0208.
		}
		\thanks{T. Huang, Y. Liu, X. Xu,  and D. Ma are with the EE Department, Tsinghua University, Beijing, China (e-mail:			\{huangtianyao, yiminliu\}@tsinghua.edu.cn; \{xy-xu15, mdy16\}@mails.tsinghua.edu.cn).
		}
		\thanks{
			N. Shlezinger and Y. C. Eldar are with the Faculty of Math and CS, Weizmann Institute of Science, Rehovot,  Israel (e-mail: nirshlezinger1@gmail.com; yonina.eldar@weizmann.ac.il).
		}
		\vspace{-1.1cm}
	}

	% make the title area
	\maketitle
	
	% As a general rule, do not put math, special symbols or citations
	% in the abstract or keywords.
	\begin{abstract}
		Modern radar systems are expected to operate reliably in congested environments. %under cost and power constraints.
		{\Revise{A candidate technology for meeting these demands is frequency agile radar (FAR), which randomly changes its carrier frequencies. FAR is known to improve the electronic counter-countermeasures (ECCM) performance while facilitating operation in  congested setups.}} % A recent technology for realizing such systems is frequency agile radar (FAR), which transmits narrowband pulses in a frequency hopping manner. 
		To enhance the target recovery performance of FAR in complex electromagnetic environments, we propose two radar schemes extending FAR to multi-carrier waveforms. The first is  Wideband Multi-carrier Agile Radar (WMAR), which transmits/receives wideband waveforms simultaneously with every antenna. To \Revise{mitigate} the \Revise{demanding hardware requirements} associated with wideband waveforms used by WMAR, we next propose multi-Carrier AgilE phaSed Array Radar (CAESAR). CAESAR uses narrowband monotone waveforms, thus \Revise{facilitating ease of implementation of the system}, while introducing {\em spatial agility}. 
		%Both WMAR and CAESAR extend conventional monotone frequency agile waveform to multi-carrier waveforms. Specifically, WMAR transmits/receives the wideband waveforms simultaneously with every antenna. CAESAR exploits the spatial diversity of phased array antenna that each antenna transmits/receives a narrowband monotone and different sub-arrays are tuned to different carriers, such that the complexity of each antenna is reduced in comparison with WMAR. 
		We characterize the transmitted and received signals of the proposed schemes, and develop an algorithm for recovering the targets, \Revise{based on} concepts from compressed sensing to estimate the range-Doppler parameters of the targets. We \Revise{then} derive conditions which guarantee their accurate reconstruction. 
		Our numerical study demonstrates that both multi-carrier schemes improve performance compared to FAR while maintaining its practical benefits. We also demonstrate that the performance of CAESAR, which uses monotone waveforms, is within a small gap from the wideband radar. 
	\end{abstract}
	%	\vspace{-.5cm}
	% Note that keywords are not normally used for peerreview papers.
	\begin{IEEEkeywords}
		\Revise{Frequency agile radar, multi-carrier agility, compressed sensing}
	\end{IEEEkeywords}

	% For peer review papers, you can put extra information on the cover
	% page as needed:
	% \ifCLASSOPTIONpeerreview
	% \begin{center} \bfseries EDICS Category: 3-BBND \end{center}
	% \fi
	%
	% For peerreview papers, this IEEEtran command inserts a page break and
	% creates the second title. It will be ignored for other modes.
	\IEEEpeerreviewmaketitle

	%----------------------------------------------------------------------------------------
	%	Introduction
	%----------------------------------------------------------------------------------------
	\vspace{-0.4cm}
	\section{Introduction}
	\label{sec:intro}
	\vspace{-0.1cm}
	%Emerging technologies, such as autonomous vehicles, introduce new challenges for radar systems. 
	{Modern radars must be reliable, but at the same time compact, flexible, robust, and efficient in terms of cost and power usage \cite{Axelsson2007,Yang2013, Liu2017DFRC,Cohen2018a,Cohen2018c}.  A possible approach to meet these requirements is by exploiting {\em frequency agility} \cite{Axelsson2007}, namely, to utilize narrowband waveforms, %even when a wideband spectrum can be used,  
	while allowing the carrier frequencies to vary between different radar pulses. 
	\Revise{Among the main advantages of \ac{far} are its} excellent \ac{eccm} and \ac{emc} performance \cite{Axelsson2007}, and the fact that it has the flexibility of supporting spectrum sharing \cite{Cohen2018a}.} Furthermore, \ac{far} is compatible with phased array antennas. {Finally,
		by utilizing narrowband signals with varying frequencies, \ac{far} systems can synthesize a large bandwidth with narrowband waveforms \cite{Huang2012,Yang2013}, \Revise{which simplifies the implementation of the waveform generator,  facilitates the receiver operation},  and allows the usage of non-linear amplifiers without limiting their power efficiency.}
	
	A major drawback of \ac{far} compared to wideband radar is its reduced range-Doppler reconstruction performance of targets.  This reduced performance is a byproduct of the relatively small number of radar measurements processed by \ac{far}, which stems from its usage of a single narrowband waveform for each pulse. 
	The performance reduction can be relieved by using \ac{cs} algorithms that exploit sparsity of the target scheme \cite{Huang2018}. 
	However, the degradation becomes notable in extremely congested or contested electromagnetic environments \cite{Huang2018a}, where there may be no vacant bands in some pulses or some radar returns of the transmitted pulses may be discarded due to strong interference \cite{Wang2006,Rao2011}.  
	
	The performance degradation of \ac{far} can be mitigated by using multi-carrier transmissions. When  multiple carriers are  transmitted simultaneously in a single pulse, the number of radar measurements is increased, and  the target reconstruction performance is improved. 
	Various multi-carrier radar schemes have been studied in the literature, including \ac{fdmamimo} \cite{Sun2014,cohen2019}, \ac{summer} \cite{Cohen2018}, and \ac{fda} radar \cite{Antonik2006, Liu2017}. In the \Revise{aforementioned} schemes, different array elements transmit waveforms at different frequencies, usually forming an omnidirectional beam and illuminating a large field-of-view\cite{Eli2013}. This degrades radar performance, especially in track mode, where a highly directional beam focusing on the target is preferred\cite{Eli2013}. In addition, frequency agility is not exploited in \ac{fdmamimo} and \ac{fda}. The derivation of frequency agile multi-carrier schemes for phased array radar, which leads to a focused  beam with high gain, is the focus of this work.

	Here, we propose two multi-carrier agile phased array radar schemes. The first uses all the antenna elements to transmit a single waveform consisting of multiple carriers simultaneously in each pulse.  Frequency agility is induced by randomly selecting the carriers utilized, resulting in a \ac{wmar} scheme. While the increased number of carriers is shown to achieve improved reconstruction performance compared to conventional \ac{far} \cite{Huang2018a}, \ac{wmar} utilizes multiband signals \Revise{of large instantaneous bandwidth. Therefore},  its implementation \Revise{does not benefit from the simplifications associated with utilizing}  conventional narrowband monotone \Revise{signals}, and may suffer from envelope fluctuation \cite{Chen2017multiband}.  
	
	To overcome the \Revise{use of instantaneously} wideband waveforms, we next develop  \ac{caesar}, which combines frequency agility and {\em spatial agility}. Specifically, \ac{caesar} selects a small number of carrier frequencies on each pulse and randomly allocates different carrier frequencies among its antenna elements, such that each array element transmits a narrowband constant modulus waveform\Revise{, facilitating system implementation}. %This approach facilitates the usage of low complexity hardware and avoids envelope fluctuation and power efficiency degradation when using non-linear power amplifiers. 
	An illustration of this transmission scheme is depicted in Fig.~\ref{fig:multi-carrier}. 
	
	For each carrier frequency, \Revise{dedicated} phase shifts on the corresponding sub-array elements are used to yield a directional transmit beam, allowing to illuminate the tracked target in a similar manner as phased array radar. Despite the fact that only a sub-array antenna is utilized for each frequency, the antenna-frequency hopping strategy of \ac{caesar} results in array antenna gain loss and a relatively small performance gap  compared to wideband radar equipped with the same antenna array. Furthermore, the combined randomization of frequency and antenna allocation can be exploited to realize a \ac{dfrc} system \cite{Hassanien2016a,Sturm2011, Ma2018,Ma2019} by embedding digital information into the selection of these parameters. We study the application of \ac{caesar} as a \ac{dfrc} system in a companion paper \cite{Huang2019}, and focus here on the radar and its performance.

	\begin{figure}
		\centering
		\includegraphics[width=\myfiguresize in]{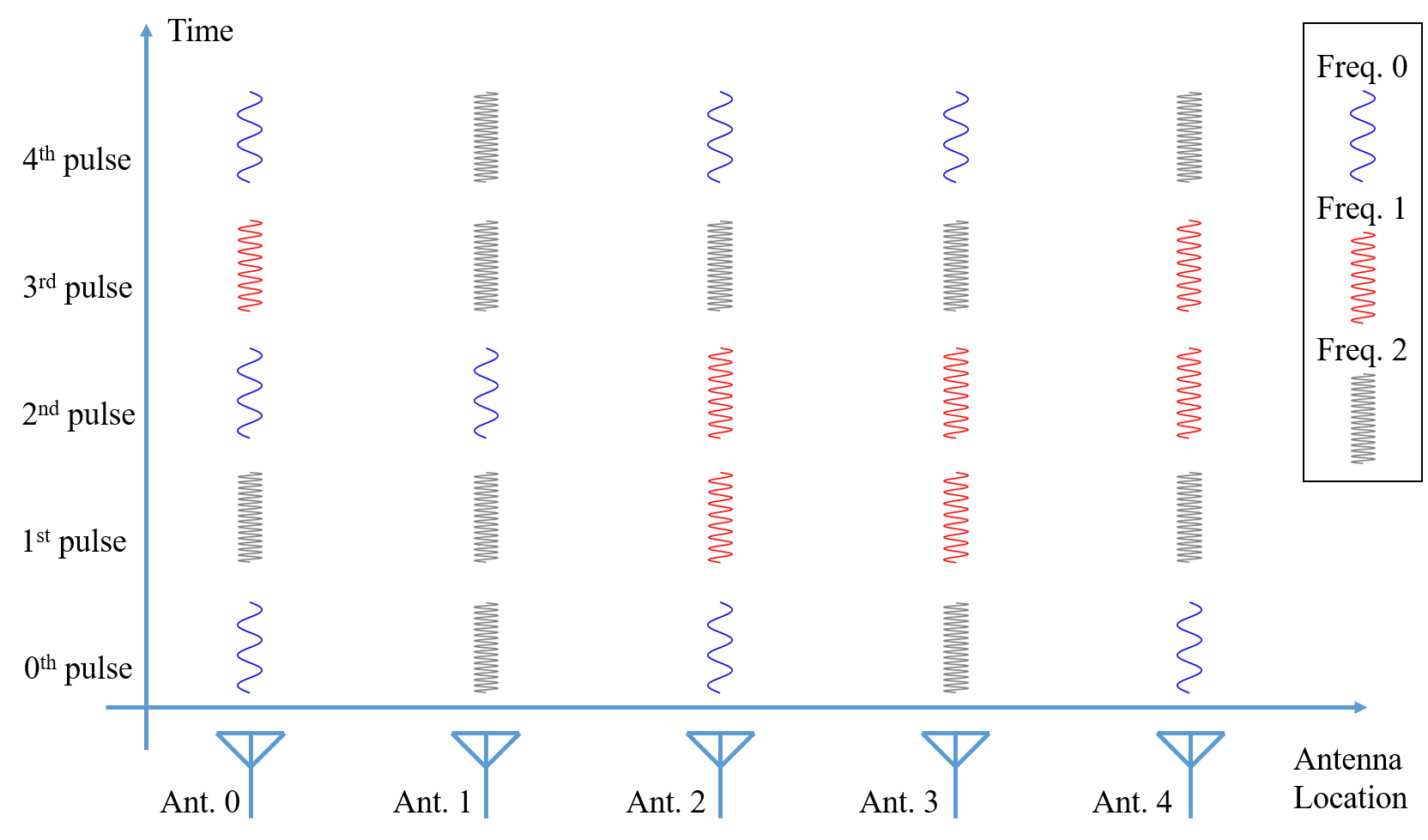}
		\caption{Transmission example of \ac{caesar}. In every pulse of this example, two out of three carrier frequencies are emitted by different sub-arrays. For example, frequency 0 and 2 are selected in the 0-th pulse and are sent by antenna 0, 2, 4 and antenna 1, 3, respectively. \ac{far} or \ac{fdmamimo}/\ac{fda} can be regarded as a special case of \ac{caesar}, with only one out of three frequencies or all available frequencies sent in each pulse.}
		\vspace{-0.6cm}
		\label{fig:multi-carrier}
	\end{figure}

	To present \ac{wmar} and \ac{caesar}, we characterize the signal model for each approach, based on which we develop a recovery algorithm for \ac{hrr}, Doppler, and angle estimation of radar targets. Our proposed algorithm utilizes \ac{cs} methods for range-Doppler reconstruction, exploiting its underlying sparsity, and applies matched filtering to detect the angles of the targets.  We provide a detailed theoretical analysis of the range-Doppler recovery performance of our proposed algorithm under complex electromagnetic environments. In particular, \Revise{we prove} that \ac{caesar} and \ac{wmar} are guaranteed to recover with high probability a number of scattering points which grows proportionally to the square root of the number of different narrowband signals used, i.e., the number of carrier frequencies that are simultaneously transmitted in each pulse. This theoretical result verifies that increasing the number of carriers improves target recovery, and reduces performance degradation due to intense interference in complex electromagnetic environments.
	\ac{wmar} and \ac{caesar} are evaluated in a numerical study, where it is shown that their range-Doppler reconstruction performance as well as robustness to interference are substantially improved compared to  \ac{far}. Additionally, it is demonstrated that the performance of \ac{caesar} is only within a small gap from that achievable using  wideband \ac{wmar}.

	\par

	The remainder of the paper is structured as follows: Sections \ref{sec:WMAR} and \ref{sec:system} present   \ac{wmar} and \ac{caesar}, respectively. 
	Section~\ref{sec:signalprocessing} introduces the recovery algorithm to estimate the range, Doppler, and angle of the targets. 
	In Section~\ref{sec:comparison} we discuss the pros and cons of each scheme compared to related radar methods. Section~\ref{sec:analysis} derives theoretical performance measures of the recovery method. Simulation results are presented in Section~\ref{sec:sim}, and Section \ref{sec:conclusion} concludes the paper.
	
	Throughout the paper, we use   $\mathbb{C}$, $\mathbb{R}$ to denote the sets of complex, real numbers, respectively, and use $| \cdot |$ for the magnitude or cardinality of a scalar number, or a set, respectively. 
	Given $x \in \mathbb{R}$, $\lfloor x \rfloor$ denotes the largest integer less than or equal to $x$, and $\binom{n}{k} = \frac{n!}{k!(n-k)!}$ represents the binomial coefficient. 
	%For complex valued variables, $j :=\sqrt{-1}$. 
	Uppercase boldface letters denote matrices (e.g., $\bm A$), and boldface lowercase letters  denote vectors (e.g., $\bm a$).
	The $(n,m)$-th element of \Revise{a} matrix $\bm A$ is denoted as $[{\bm A}]_{m,n}$, and similarly $[{\bm a}]_{n}$ is the $n$-th entry of the vector $\bm a$.  Given a matrix $\bm A \in \mathbb{C}^{M \times N}$, and a number $n$ (or a set of integers, $\Lambda$), $\left[\bm A\right]_n$ ($\left[\bm A\right]_{\Lambda} \in \mathbb{C}^{M \times |\Lambda|}$) is the $n$-th column of $\bm A$ (the sub-matrix consisting of the columns of $\bm A$ indexed by $\Lambda$). 
	Similarly,  $\left[\bm a\right]_{\Lambda} \in \mathbb{C}^{ |\Lambda|}$ is the sub-vector consisting of the elements of $\bm a \in \mathbb{C}^{N}$ indexed by $\Lambda$. 
	The complex conjugate, transpose, and the complex conjugate-transpose are denoted $(\cdot)^*$, $(\cdot)^T$, $(\cdot)^H$, respectively. 
	\Revise{We denote} $\| \cdot \|_p$ as the $\ell_p$ norm,  $\| \cdot \|_0$ is the number of non-zero entries, and  $\| \cdot \|_F$ is the Frobenius norm. 
	%	With a slight abuse of notation, we use $\mySet{N}$ and $\mathcal{N}(0,\sigma^2)$ to denote a set and zero-mean Gaussian distribution with variance $\sigma^2$, respectively.
	The probability measure is $\mathbb{P}(\cdot)$, while ${\rm E}[\cdot]$ and ${\rm D}[\cdot]$ are the expectation and variance of a random argument, respectively. 
	%	The real and imaginary part of $x\in \mathbb{C}$ are denoted by ${\rm Re}\left( x \right)$ and ${\rm Im}\left( x \right)$, respectively.

	%----------------------------------------------------------------------------------------
	%	WMAR
	%----------------------------------------------------------------------------------------
	\vspace{-0.2cm}
	\section{WMAR}
	\label{sec:WMAR}
	\vspace{-0.1cm}
	In this section we present the proposed \ac{wmar} scheme, which originates from \ac{far}  \cite{Axelsson2007}, aiming to increase the number of radar measurements and improve the range-Doppler recovery performance. 	
	%Before presenting our radar system, 
	We first briefly review \ac{far} in Subsection \ref{subsec:far}. Then, we detail the proposed \ac{wmar} in Subsection \ref{subsec:WMARmulti-carrier}, and present the resulting radar signal model in Subsection \ref{subsec:WMARreceive}.  
	
	%-----------------------------------
	%	Preliminaries in Frequency Agile Radar
	%-----------------------------------
	\vspace{-0.2cm}
	\subsection{Preliminaries of FAR}
	\label{subsec:far}
	\vspace{-0.1cm}
	\ac{far} \cite{Axelsson2007} is a technique for enhancing the \ac{eccm} and \ac{emc} performance of radar systems by using randomized carrier frequencies. In the following we consider a radar system equipped with $L$ antenna elements,  uniformly located on an antenna array with distance $d$ between two adjacent elements. 
	Let $N$ be the number of radar pulses transmitted in each \ac{cpi}. Radar pulses are repeatedly transmitted, starting from time instance $nT_r$ to $nT_r + T_p$, $n \in \mySet{N} :=\! \{0,1,\ldots,N\!-\!1\}$, where $T_r$ and $T_p$ represent the pulse repetition interval and pulse duration, respectively, and $T_r > T_p$. 
	Let  $\mathcal{F}$ be the set of available carrier frequencies,  given by 
	%	\begin{equation}
	$	\mathcal{F}:=\{ f_c + m\Delta f | m \in \mySet{M}\}$,
	%	\end{equation}
	where $f_c$ is the initial carrier frequency, $\mySet{M}: =\! \{0,1,\dots,M\!-\!1\}$,  $M$ is the number of available frequencies, and $\Delta f$ is the frequency step. 
	
	In the $n$-th radar pulse, \ac{far} randomly selects a carrier frequency 
	$f_n$ from $\mathcal{F}$.  The waveform sent from each antenna for the $n$-th pulse at time instance $t$ is  $\phi(f_{n}, t-nT_r)$, where 
	\begin{equation}
	\phi(f, t) :=  {\rm rect}\left({t}/{T_p} \right) e^{j2\pi f t}, 
	\end{equation}
	and ${\rm rect}(t) = 1$ for $t \in [0,1)$ and zero otherwise, representing rectangular envelope baseband signals.

	In order to direct the antenna beam pointing towards a desired angle $\theta$, the signal transmitted by each antenna is weighted by a phase shift $w_l(\theta,f_{n})\in \mathbb{C}$ \cite{Pillai1989}, given by
	\begin{equation}
	\label{eq:weight}
	w_l(\theta,f) := e^{j2\pi f ld \sin \theta/c},
	\end{equation}
	where $c$ denotes the speed of light. Define the vector $\bm w\left(\theta, f\right) \in \mathbb{C}^{L}$ whose $l$-th entry is $\left[\bm w\left(\theta,f\right)\right]_l := w_l\left(\theta,f\right)$. 
	The transmitted signal  can  be written as
	\begin{equation}
	\label{eq:xFAR}
	{\bm x}_{\rm F}(n,t)  := \bm w(\theta,f_{n}) \phi(f_{n}, t-nT_r).
	\end{equation}
	The vector ${\bm x}_{\rm F}(n,t) \in \mathbb{C}^{L}$ in \eqref{eq:xFAR} denotes the transmission vector of the full array for the $n$-th pulse at time instance $t$. 
	
	The fact that \ac{far} transmits monotone waveform facilitates its  realization. Furthermore,  the frequency agility achieved by randomizing the  frequencies between pulses enhances  survivability in complex electromagnetic environments. However, this comes at the cost of reduced number of radar measurements, which degrades the target recovery performance, particularly in the presence of  interference, where some of the radar returns are missed \cite{Huang2018a}. To overcome these drawbacks, in the following we propose \ac{wmar}, which extends \ac{far} to multi-carrier transmissions.

	%-----------------------------------
	%	The Proposed WMAR
	%-----------------------------------
	\vspace{-0.2cm}
	\subsection{WMAR Transmit Signal Model}
	\label{subsec:WMARmulti-carrier}
	\vspace{-0.1cm}
	\ac{wmar} extends \ac{far} to multi-carrier signalling. Broadly speaking, \ac{wmar} transmits a single multiband waveform from all its antennas, maintaining  frequency agility by randomizing a {\em subset} of the available frequencies on each pulse. 
	
	Specifically, in the $n$-th radar pulse, \ac{wmar} randomly selects a set of carrier frequencies 
	$\mySet{F}_n$ from $\mathcal{F}$, $\mySet{F}_n \subset \mathcal{F}$. We assume that the cardinality of $\mySet{F}_n$ is constant, i.e., $|\mySet{F}_n| = K$ for each $n \in \mySet{N}$, and write the elements of this set as $\mySet{F}_n = \{\Omega_{n,k}| k \in \mySet{K} \}$, $\mySet{K}:=\{0,1,\dots, K-1\}$. The portion of the  $n$-th pulse of \ac{wmar} in the $k$-th frequency is given by ${\bm x}_{{\rm W}, k}(n,t) := \frac{1}{\sqrt{K}} \bm w\left(\theta,\Omega_{n,k}\right) \phi\left(\Omega_{n,k}, t-nT_r\right)$, and the overall transmitted vector is $\bm x_{\rm W}(n,t)  \! =\! \sum_{k=1}^{K}{\bm x}_{{\rm W}, k}(n,t) $, i.e.,
	\vspace{-0.1cm}
	\begin{align}
	\bm x_{\rm W}(n,t)  
	%&= \sum\limits_{k=1}^{K}{\bm x}_{{\rm W}, k}(n,t) \notag \\
	&=  \sum\limits_{k=1}^{K}\frac{1}{\sqrt{K}} \bm w\left(\theta,\Omega_{n,k}\right) \phi\left(\Omega_{n,k}, t-nT_r\right),
	\label{eq:xWMAR}
	\vspace{-0.1cm}
	\end{align}
	where the factor $\frac{1}{\sqrt{K}}$ guarantees that \eqref{eq:xWMAR} has the same total power as the \ac{far} signal \eqref{eq:xFAR}. 
	
	\ac{far} is a special case of \ac{wmar} under the setting $K=1$. By using multiple carriers simultaneously via wideband signalling, i.e., $K>1$, \ac{wmar} transmits a highly directional beam, while improving the robustness to missed pulses compared to \ac{far}. The improved performance stems from the use of multi-carrier transmission, which increases the number of radar measurements. To see this, we detail the received signal model of \ac{wmar} in the following subsection.
	
	%-------------------------------------------------
	%
	% subsection: Received Signal Model
	%
	%-------------------------------------------------
	%\vspace{-0.2cm}
	\subsection{WMAR Received Signal Model}
	\label{subsec:WMARreceive}
	%\vspace{-0.1cm}
	We next model the received signal processed by \ac{wmar} for target identification. To that aim, we focus on the time interval after the $n$-th pulse is transmitted, i.e., $nT_r + T_p < t< (n+1)T_r$. In this period, the radar receives echoes of the  pulse, which are sampled and processed in discrete-time. 
	
	To formulate the radar returns, we assume an ideal scattering point\Revise{, representing either target or clutter,} with scattering coefficient $\beta \in \mathbb{C}$ located in the transmit beam of the radar with direction angle $\vartheta$, i.e., $\vartheta \approx \theta$. Denote by $r(t)$ the range between the target\Revise{/clutter} and the first radar antenna array element at time $t$. The scattering point is moving at a constant velocity $v$ radially along with the radar line of sight, i.e., $r(t) = r(0) + vt$. Under the ``stop and hop'' assumption \cite[Page 99, Ch. 2]{richards2005fundamentals}, which assumes that the target hops to a new location when the radar transmits a pulse and stays there until another pulse is emitted, the range in the $n$-pulse is approximated as 
	\vspace{-0.1cm}
	\begin{equation}
	r(t) \!\approx\! r(nT_r) \!=\! r(0) \!+\! v \cdot nT_r, \quad nT_r\! <\!t\!<\! (n\!+\!1)T_r.
	\vspace{-0.1cm}
	\label{eqn:rtdef}
	\end{equation}
	
	To model the received signal, we first consider the $n$-th radar pulse that reaches the target, denoted by  ${\tilde{x}}(n,t) $. Let ${\tilde{x}}_k(n,t)$ be its component at frequency $\Omega_{n,k}$, i.e., ${\tilde{x}}(n,t) := \sum_{k=0}^{K-1}{\tilde{x}}_k(n,t)$. Note that ${\tilde{x}}_k(n,t)$ is a summation of delayed transmissions from the corresponding antenna elements. The delay for the $l$-th array element is ${r(nT_r)}/{c}+{ld\sin \vartheta}/{c}$. Under the narrowband, far-field assumption, using \eqref{eq:weight}, we have that
	\vspace{-0.1cm}
	\begin{eqnarray}
	%\sum \limits_{l=0}^{L_{\rm R}-1} \!\!\!\!\! \!\!\!&&\!\!\!\!\!\!\!\left[ \bm x\left(t-\frac{r_{\rm S}+ld\sin \theta_{\rm S}}{c}\right) \right]_l \nonumber\\
	{\tilde{x}}_k(n,t)
	%\!\!\!\!\!\!	&=&	\!\!\! \sum \limits_{l=0}^{L-1} \left[ \bm x_{{\rm W},k}\left(n,t-{r(nT_r)}/{c}-{ld\sin \vartheta}/{c}\right)\right]_l \nonumber \\
	\!\!\!\!\!\!
	&=&	\!\!\! \sum \limits_{l=0}^{L-1} \left[ \bm x_{{\rm W},k}(n,t-{r(nT_r)}/{c})\right]_l e^{-j2\pi \Omega_{n,k}{ld\sin \vartheta}/{c} }\nonumber \\
	\!\!\!\!\!\!& {=}&\!\!\! \bm w^H\left(\vartheta,\Omega_{n,k}\right) \bm x_{{\rm W},k}(n,t-{r(nT_r)}/{c}). \label{eq:signal_at_scatterer}
	\end{eqnarray}
	Substituting \eqref{eqn:rtdef} and the definition of $\bm x_{{\rm W},k}(n,t)$  into \eqref{eq:signal_at_scatterer} yields
	\vspace{-0.1cm}
	\begin{equation}
	\!\!{\tilde{x}}_k(n,t)\! =\! \frac{{\rho}_{\rm W} (n,k,\delta_{\vartheta})}{\sqrt{K}}\phi\!\left(\Omega_{n,k}, t\!-\!nT_r\!-\!\frac{r(0)\!+\!nvT_r}{c}\right)\!,
	\label{eqn:xktempW}
	\end{equation}
	where $\delta_{\vartheta} :=  \sin \vartheta \!- \!\sin \theta$ is the relative direction sine with respect to the  transmit beam, and ${\rho}_{\rm W} (n,k,\delta_{\vartheta}):=\bm w^H\left(\vartheta,\Omega_{n,k}\right) \bm w\left(\theta,\Omega_{n,k}\right)$ is the transmit gain, expressed  as
	\vspace{-0.1cm}
	\begin{equation}
	{\rho}_{\rm W} (n,k,\delta_{\vartheta}) = \sum_{l=0}^{L-1}{ e^{-j 2\pi \Omega_{n,k} ld\delta_{\vartheta}/{c}}}. 
	\label{eq:rhoW}
	\vspace{-0.1cm}
	\end{equation}
	Note that $\rho_{\rm W} (n,k,\delta_{\vartheta})$ approaches $L$ when $\delta_{\vartheta} \approx 0$.
	
	Having modeled the signal which reaches the target, we now derive the radar returns observed by the antenna array. After being reflected by the scattering point, the signal at the $k$-th frequency propagates back to the $l$-th radar array element with an extra delay of $r(nT_r)/c + {ld\sin \vartheta}/{c}$, resulting in
	\begin{equation}
	\left[\widetilde{\bm y}_{{\rm W},k}(n,t)\right]_l:=\beta {\tilde{x}}_k\left(n,t-{r(nT_r)}/{c}-{ld\sin \vartheta}/{c}\right) .
	\end{equation}
	The echoes vector $\widetilde{\bm y}_{{\rm W},k}(n,t)\in \mathbb{C}^L$ can be written as
	\begin{align}
	\widetilde{\bm y}_{{\rm W},k}(n,t)\!&=\!\beta {\bm w}^*\left(\vartheta,\Omega_{n,k}\right) {\tilde{x}}_k\left(n,t-{r(nT_r)}/{c}\right)  \notag\\
	\!&\stackrel{(a)}{=}\! \frac{\beta}{\sqrt{K}} {\bm w}^*\left(\vartheta,\Omega_{n,k}\right)\rho_{\rm W} (n,k,\delta_{\vartheta})\notag \\ 
	&\!\times\phi\left(\Omega_{n,k}, t\!-\!nT_r\!-\!{(2r(0)+2nvT_r)}/{c}\right),
	\label{eqn:RadarRx1W}
	\end{align}
	where $(a)$ follows from \eqref{eqn:xktempW}.
	
	The received echoes at all $K$ frequencies are then separated and sampled independently by each array element. 	The  signal $\widetilde{\bm y}_{{\rm W},k}(n,t)$ is sampled with a rate of $f_s = 1/T_p$ at time instants $t =  nT_r + i/f_s$, $i=0,1,\dots,\lfloor T_r f_s\rfloor-1$, such that each pulse is sampled once. Every sample time instant corresponds to a \ac{crc}, $r \in \left( \frac{i-1}{2f_s}c, \frac{i}{2f_s}c\right)$. %, where we recall that $c$ denotes the light speed. 
	The division to \ac{crc}s indicates coarse range information of scattering points. 
	We focus on an arbitrary $i$-th \ac{crc}, assuming that the scattering point does not move between \ac{crc}s during a \ac{cpi}, i.e., there exists some integer $i$ such that
	\begin{equation}
	\left\{ \begin{array}{l}
	\frac{i-1}{2f_s}c<r\left( 0 \right) <\frac{i}{2f_s}c,\\
	\frac{i-1}{2f_s}c<r\left( 0 \right) +vnT_r<\frac{i}{2f_s}c, \quad \forall n \in \mySet{N}.\\
	\end{array} \right.
	\end{equation}
	
	Collecting radar returns from $N$ pulses and $L$  elements at the same \ac{crc} yields a data cube $\bm Y_{\rm W} \in \mathbb{C}^{L \!\times\! N\! \times\! K}$ with entries 
	\begin{equation}
	\label{eq:data_cube}
	[\bm Y_{\rm W}]_{l,n,k}:= 	\left[\widetilde{\bm y}_{{\rm W},k}(n,n T_r + i/f_s)\right]_l,   
	\end{equation}
	where $i$ is the \ac{crc} index. 
	The data cube $\bm Y_{\rm W}$ is processed to estimate the refined range information, Doppler, and angle of the scattering point. Data cubes from different \ac{crc}s are processed identically and separately.
	
	Finally, we formulate how the unknown parameters of the targets are embedded in the processed data cube $\bm Y_{\rm W}$. To that aim, define $\delta_r := r(0)-ic/2f_s$ as the high-range resolution distance,  $c_{n,k} := (\Omega_{n,k}-f_c)/\Delta f \in \mySet{M}$ as the carrier frequency index, and $\zeta_{n,k} = \Omega_{n,k}/f_c$ as the relative frequency factor. Then, denoting by $\tilde{\beta} := \beta e^{-j4\pi f_c \delta_r/c}$, $\tilde{r} := -4\pi \Delta f \delta_r/c$ and $\tilde{v} := -4\pi f_c v T_r/c$ the generalized scattering intensity, and the normalized range and velocity, respectively, and substituting \eqref{eqn:RadarRx1W} into \eqref{eq:data_cube}, we have that
	\vspace{-0.1cm}
	\begin{align}
	\!\!  [\bm Y_{\rm W}]_{l,n,k} \!=\!\! 	\frac{\tilde{\beta}e^{j \tilde{r} c_{n,k}}}{\sqrt{K}}& e^{j \tilde{v} n \zeta_{n,k}}e^{\!-\!j 2\pi  \frac{\Omega_{n,k} ld {\sin \vartheta}}{c}} \! { \rho}_{\rm W} (n,k,\delta_{\vartheta}).  
	\label{eq:data_cube2} 
	\vspace{-0.1cm}
	\end{align}
	The unknown parameters in \eqref{eq:data_cube2} are $\tilde{\beta} $, $\tilde{r}$, $\tilde{v}$ and $(\sin \vartheta, \delta_{\vartheta})$, which are used to reveal the scattering intensity $|\beta|$, \ac{hrr} range $r(0)$, velocity $v$ and angle $\vartheta$ of the target.
	
	The above model can be naturally extended to noisy multiple scatterers. When there are $S$ scattering points inside the \ac{crc} instead of a single one as assumed previously, the received signal is a summation of returns from all these points corrupted by additive noise, denoted by $\bm N \in \mathbb{C}^{L \times N \times K}$. Following \eqref{eq:data_cube2}, the entries of the data matrix are  
	\vspace{-0.1cm}
	\begin{align} 
	\left[ \bm Y_{\rm W}\right]_{l,n,k} \!=\! \frac{1}{\sqrt{K}}\sum \limits_{s = 0}^{S-1}&\tilde{\beta}_s e^{j \tilde{r}_s c_{n,k}}e^{j \tilde{v}_s n \zeta_{n,k}}e^{-j 2\pi \Omega_{n,k} ld {\sin \vartheta_s}/c} \notag \\
	&\times{ \rho}_{\rm W} (n,k,\delta_{\vartheta_s}) + \left[\bm N\right]_{l,n,k}, 
	\label{eq:entry_y_sumW}
	\vspace{-0.1cm}
	\end{align}
	where  $\{\tilde{\beta}_s\}$, $\{\tilde{r}_s\}$, $\{\tilde{v}_s\}$ and $\{\vartheta_s\}$ represent the sets of factors  of scattering coefficients, ranges, velocities, and angles of the $S$ scattering points, respectively, which are unknown and should be estimated.  A method for recovering these parameters from the data cube $ \bm Y_{\rm W}$ is detailed in Section \ref{sec:signalprocessing}. 
	
	\ac{wmar} has several notable advantages: First, as an extension of \ac{far}, it preserves its frequency agility and is suitable for implementation with phased array antennas. Furthermore, as we discuss in Section \ref{sec:comparison}, its number of radar measurements for each \ac{crc} is increased by a factor of $K$ compared to \ac{far}, thus yielding increased robustness to interference. However,  \ac{wmar}  transmitters simultaneously send multiple carriers instead of a monotone as in \ac{far}, which requires large instantaneous bandwidth, leading to envelope fluctuation and low amplifier efficiency. To overcome these issues, we introduce \ac{caesar} in the following section, which utilizes narrowband radar transceivers while introducing spatial agility, enabling multi-carrier transmission using monotone signals at a cost of a minimal array antenna gain loss.
	
	%----------------------------------------------------------------------------------------
	%	CAESAR
	%----------------------------------------------------------------------------------------
	%\vspace{-0.2cm}
	\section{CAESAR}
	\label{sec:system}
	%\vspace{-0.1cm}
	\ac{caesar},  similarly to \ac{wmar}, extends \ac{far} to multi-carrier transmission. However, unlike \ac{wmar}, \ac{caesar} utilizes monotone signalling and reception, and is thus more suitable for implementation. We detail the transmit and receive models of \ac{caesar} in Subsections \ref{subsec:multi-carrier} and \ref{subsec:receive}, respectively.

	%-----------------------------------
	%	The Proposed CAESAR
	%-----------------------------------
	%\vspace{-0.2cm}
	\subsection{CAESAR Transmit Signal Model}
	\label{subsec:multi-carrier}
	%\vspace{-0.1cm}
	%
	Broadly speaking, \ac{caesar} extends \ac{far} to multi-carrier signalling by transmitting monotone waveforms with varying frequencies from different antenna elements. The selection of the frequencies, as well as their allocation among the antenna elements, is randomized anew in each pulse, thus inducing both {\em frequency and spatial agility}.	
	
	To formulate \ac{caesar}, we consider the same pulse radar formulation detailed in Section \ref{sec:WMAR}.	Similarly to \ac{wmar} detailed in Subsection \ref{subsec:WMARmulti-carrier}, in the $n$-th radar pulse, \ac{caesar} randomly selects a set of carrier frequencies 
	$\mySet{F}_n = \{\Omega_{n,k}| k \in \mySet{K} \}$ from $\mathcal{F}$. While \ac{wmar} uses the set of selected frequencies to generate wideband waveforms, \ac{caesar} allocates a sub-array for each frequency, such that all the antenna array elements are utilized for transmission, each at a single carrier frequency.
	Denote by $f_{n,l} \in \mySet{F}_n$ the frequency used by the $l$-th antenna array element, $l \in \mySet{L}:=\{0,1,\dots, L-1\}$. After phase shifting the waveform to direct the beam, the $l$-th array element transmission  can  be written as
	\vspace{-0.1cm}
	\begin{equation}
	\label{eq:x}
	\left[ \bm x_{\rm C}(n,t) \right]_l := \left[ \bm w(\theta,f_{n,l}) \right]_l \phi(f_{n,l}, t-nT_r).
	\vspace{-0.1cm}
	\end{equation}
	The vector $\bm x_{\rm C}(n,t) \in \mathbb{C}^{L}$ in \eqref{eq:x} denotes the full array transmission vector for the $n$-th pulse at time $t$. Here, unlike \ac{far} which transmits a single frequency from the full array  \eqref{eq:xFAR}, \ac{caesar} assigns  diverse frequencies to different sub-array antennas, as illustrated in Fig.~\ref{fig:multi-carrier}.

	The transmitted signal \eqref{eq:x} can also be expressed by grouping the array elements which use the same frequency $\Omega_{n,k}$. Let ${\bm x}_{{\rm C},k}(n,t)\in \mathbb{C}^{L}$ with zero padding represent the portion of ${\bm x}_{\rm C}(n,t)$ which utilizes $\Omega_{n,k}$, i.e., 
	\vspace{-0.1cm}
	\begin{equation}
	\bm x_{{\rm C},k}(n,t)  = \bm P(n,k) \bm w\left(\theta,\Omega_{n,k}\right) \phi\left(\Omega_{n,k}, t-nT_r\right),
	\label{eqn:TxSignal_k}
	\vspace{-0.1cm}
	\end{equation}
	where $\bm P(n,k) \in \{0,1\}^{L \times L}$ is a diagonal selection matrix with diagonal $\bm p(n,k)  \in \{0,1\}^{L}$, whose $l$-th entry is one if the $l$-th array element transmits at frequency $\Omega_{n,k}$ and zero otherwise, i.e., $\left[ \bm P(n,k) \right]_{l,l} =\left[ \bm p(n,k) \right]_{l} = 1$ and $\left[ \bm x_{{\rm C},k}(n,t) \right]_{l} = \left[ \bm x_{\rm C}(n,t) \right]_{l}$ when $f_{n,l} = \Omega_{n,k}$. 
	The transmitted signal is thus $\bm x_{\rm C}(n,t) := \sum_{k=0}^{K-1}\bm x_{{\rm C},k}(n,t)$, namely
	\vspace{-0.1cm}
	\begin{equation}
	\bm x_{\rm C}(n,t)  = \sum \limits_{k=0}^{K-1} \bm P(n,k) \bm w\left(\theta,\Omega_{n,k}\right) \phi\left(\Omega_{n,k}, t-nT_r\right).
	\label{eqn:TxSignal}
	\vspace{-0.1cm}
	\end{equation}
	Comparing \eqref{eqn:TxSignal} with \eqref{eq:xWMAR}, we find that each array element of \ac{caesar} transmits a single frequency with unit amplitude while in \ac{wmar} all $K$ frequencies with amplitudes scaled by a factor $1/\sqrt{K}$ are sent by each element.
	
	The diagonal selection matrices $\bm P(n,0),\dots,\bm P(n,K-1)$ uniquely describe the allocation of antenna elements for the $n$-th pulse.  \ac{caesar} transmission scheme implies that
	$\sum _{k=0}^{K-1}\bm P(n,k) = \bm I_{L}$,
	%\end{equation}
	i.e., all the antenna elements are utilized for the transmission of the $n$-th pulse. The trace of $\bm P(n,k)$ represents the number of antennas using the $k$-th frequency. Without loss of generality, we assume that $L/K$ is an integer and ${\rm tr} \left(\bm P(n,k)\right) = L/K$, for each $n \in  \mySet{N}$ and $k \in \mySet{K}$.
	\par 
	Phased array \ac{far} and \ac{fda} \cite{Liu2017} are special cases of \ac{caesar} with $K= 1$ and $K = M = L$, respectively. A  fundamental difference between these radar schemes is the transmit beam pattern. In \ac{far}, the same carrier frequency is utilized by all the elements, i.e., $\Omega_{n,k}$ and $f_{n,l}$ are identical for each $k \in \mySet{K}$ and $l \in \mySet{L}$, respectively, resulting in highly directional beam. In \ac{fda}, all available frequencies are transmitted simultaneously and one frequency corresponds to a single antenna element, leading to an omnidirectional beam which degrades radar performance and is not suitable for target tracking \cite{Eli2013}.
	The proposed \ac{caesar} uses only a subset of the available frequencies in each pulse and multiple antenna elements share the same frequency, thus achieving a compromise radiation beam that only illuminates the desired angle. Despite the  gain loss in comparison with \ac{far} discussed in Section \ref{sec:comparison}, \ac{caesar} achieves improved range-Doppler reconstruction performance and increased robustness to interference, as numerically demonstrated in Section \ref{sec:sim}.
	
	% Nir: I omitted the following paragraph
	%	We note that \ac{caesar} transmits $K$ different waveforms in each pulse, while traditional phased array antennas use a single waveform. Nonetheless, the transmission scheme of \ac{caesar} is based on phased array radar: in each frequency $\Omega_{n,k}$ the transmission of the corresponding sub-array is based on conventional phased array transmission directed at the same angle, and, as discussed in the next subsection, each antenna receives only its corresponding frequency. Furthermore, \ac{caesar} transmission can be implemented using phased array antennas using, e.g., a switching architecture. Consequently, we consider \ac{caesar} to be a phased array radar based method.

	%-------------------------------------------------
	%
	% subsection: Received Signal Model
	%
	%-------------------------------------------------
	\vspace{-0.2cm}
	\subsection{CAESAR Received Signal Model}
	\label{subsec:receive}
	\vspace{-0.1cm}
	We next model the received signal processed by \ac{caesar}. Unlike \ac{wmar}, in which each antenna receives and separates different frequency components, in \ac{caesar}, the $l$-th antenna element only receives radar returns at frequency $f_{n,l}$, and abandons other frequencies. This enables the use of narrowband receivers, simplifying the hardware requirements.  
	
	Note that the derivation of the signal component received at the $k$-th frequency in \eqref{eq:signal_at_scatterer},  $\tilde{x}_k(n,t)$,  does not depend on the specific radar scheme.  Here, substituting \eqref{eqn:TxSignal_k}  into \eqref{eq:signal_at_scatterer} yields
	\begin{equation}
	\!\!\!{\tilde{x}}_k(n,t)\! =\! \rho_{\rm C} (n,k,\delta_{\vartheta})\phi\!\left(\Omega_{n,k}, t\!-\!nT_r\!-\!\frac{r(0)\!+\!nvT_r}{c}\right)\!,
	\label{eqn:xktemp}
	\end{equation}
	where $\rho_{\rm C} (n,k,\delta_{\vartheta})\!:=\!\bm w^H\left(\vartheta,\Omega_{n,k}\right)\bm P(n,k) \bm w\left(\theta,\Omega_{n,k}\right)$ is the transmit gain of the selected sub-array antenna, expressed  as
	\begin{equation}
	\rho_{\rm C} (n,k,\delta_{\vartheta}) =\sum_{l=0}^{L-1}{ \left[\bm p(n,k)\right]_l e^{-j 2\pi \Omega_{n,k} ld\delta_{\vartheta}/{c}}}. 
	\label{eq:rho}
	\end{equation}
	Note that, \Revise{in contrast} to the transmit gain of \ac{wmar} in \eqref{eq:rhoW} which tends to $L$, $\rho_{\rm C} (n,k,\delta_{\vartheta})$ approaches $L/K$ when $\delta_{\vartheta} \approx 0$.
	By repeating the arguments in the derivation of \eqref{eqn:RadarRx1W}, the echo vector $\widetilde{\bm y}_{{\rm C},k}(n,t)\in \mathbb{C}^L$ can be written as
	\begin{eqnarray}
	\widetilde{\bm y}_{{\rm C},k}(n,t)\!\!\!\!\!\!&=& \!\!\! \beta {\bm w}^*\left(\vartheta,\Omega_{n,k}\right)\rho_{\rm C} (n,k,\delta_{\vartheta})\notag \\ 
	&&\!\!\!\times\phi\left(\Omega_{n,k}, t\!-\!nT_r\!-\!{(2r(0)+2nvT_r)}/{c}\right).
	\label{eqn:RadarRx1}
	\end{eqnarray}
	
	%\vspace{-0.3cm}
	\ac{caesar} receives and processes impinging signals  by the corresponding elements of the antenna array. In particular, only a sub-array, whose elements are indicated by $\bm P(n,k)$, receives the impinging signal $\widetilde{\bm y}_{{\rm C},k}(n,t)$; the other array elements are tuned to other frequencies. The zero-padded received signal at the $k$-th frequency, denoted by ${\bm y}_{{\rm C},k}(n,t) \in \mathbb{C}^{L}$,  is thus ${\bm y}_{{\rm C},k}(n,t) := \bm P(n,k) \widetilde{\bm y}_{{\rm C},k}(n,t)$. 
	The full array received signal  is given by ${\bm y}_{\rm C}(n,t) := \sum _{k=0}^{K-1}{\bm y}_{{\rm C},k}(n,t)$.

	The observed signal ${\bm y}_{\rm C}(n,t)$ is sampled in a similar manner as detailed in Subsection \ref{subsec:WMARreceive}. Since \ac{caesar} processes a single frequency component per antenna element, the measurements from each \ac{crc} are collected together as a data matrix $\bm Y_{\rm C} \in \mathbb{C}^{L \times N}$, as opposed to a $L \times N \times K$ cube processed by \ac{wmar}.   By repeating the arguments used for obtaining \eqref{eq:data_cube2}, it holds that 
	\begin{equation}
	\left[ \bm Y_{\rm C}\right]_{l,n}\!\!\! =\!\! \tilde{\beta} e^{j \tilde{r} c_{n,k}}e^{j \tilde{v} n \zeta_{n,k}}e^{-j 2\pi \Omega_{n,k} ld {\sin \vartheta}/c}\rho_{\rm C} (n,k,\delta_{\vartheta}),
	\label{eq:entry_y_simp}
	\end{equation}
	which can be extended to account for multiple targets and noisy measurements as in \eqref{eq:entry_y_sumW}, i.e., 
	\begin{align} 
	\left[ \bm Y_{\rm C}\right]_{l,n} &\!=\! \sum \limits_{s = 0}^{S-1}\tilde{\beta}_s e^{j \tilde{r}_s c_{n,k}}e^{j \tilde{v}_s n \zeta_{n,k}}e^{-j 2\pi \Omega_{n,k}  \frac{ld\sin \vartheta_s}{c}}\notag \\
	&\quad\quad \times \rho_{\rm C} (n,k,\delta_{\vartheta_s}) + \left[\bm N\right]_{l,n},
	\label{eq:entry_y_sum}
	\end{align}	
	where $\bm N \in \mathbb{C}^{L \times N}$ is the additive noise. 
	In order to recover the unknown parameters from the acquired data matrix \eqref{eq:entry_y_sum}, in the following section we present a dedicated recovery scheme. %While the considered model is expressed as noise free, the recovery method and its performance analysis given in Section \ref{sec:analysis} also apply to the case in which a noisy version of the data matrix $\bm Y$ is processed. 

	%Considering additive noises, we rewrite \eqref{eq:entry_y_sum} as
	%\begin{equation}
	%\begin{split}
	%\left[ \bm Y\right]_{l,n} &= \sum \limits_{s = 0}^{S-1}\tilde{\beta}_s e^{j \tilde{r}_s c_{n,k}}e^{j \tilde{v}_s n \zeta_{n,k}}e^{-j 2\pi \Omega_{n,k} ld {\sin \vartheta_s}/c}\rho (n,k,\delta_{\vartheta_s}) \\
	%&\quad+ \left[ \bm N \right]_{l,n},
	%\label{eq:entry_y_sum_noise}
	%\end{split}
	%\end{equation}
	%where $\bm N \in \mathbb{C}^{L \times N}$ is the noise matrix.
	%-------------------------------------------------
	%
	% section: Signal Processing Method
	%
	%-------------------------------------------------
	\vspace{-0.3cm}
	\section{Target Recovery Method}
	\label{sec:signalprocessing}
	\vspace{-0.1cm}
	Here, we present an algorithm for reconstructing  the unknown \ac{hrr} range, velocity, angle, and scattering intensity parameters of the scattering points from the radar measurements of both \ac{wmar} and \ac{caesar}. \Revise{{Detection is performed based on the estimated scattering intensities.} {The detected scattering points may belong to either target or clutter, and they are identified by their Doppler estimates. The motivation for this approach is that in many ground-based radar systems, fast moving targets are of interest, while static or slow moving scatterers with zero or nearly zero Doppler are regarded as clutter. We henceforth model the Doppler of both targets and clutter as unknown parameters, which are simultaneously estimated. In specific applications where the clutter Doppler is a-priori known, one can apply clutter mitigation \cite{Axelsson2006} in advance to the target recovery method. A similar procedure is also applied in pulse Doppler radars \cite[Ch. 5.5.1]{richards2005fundamentals}, where the moving target indication filtering for gross clutter removal is placed prior to the pulse Doppler filter bank.}}	 
	
	In order to maintain feasible computational complexity, we do not estimate all the parameters simultaneously: our proposed algorithm first jointly recovers the range-Doppler parameters followed by estimation of the unknown angles. When performing joint range-Doppler estimation, we assume that all the scattering points are located within the mainlobe of the transmit beam, and that the difference of the angle sine is negligible, i.e., $\delta_{\vartheta} \approx 0$. We then estimate the direction angles of scattering points based on their range-Doppler estimates. 
	
	We divide the target recovery method into three stages: 1) apply receive beamforming such that the magnitude of the received signal is enhanced, facilitating   range-Doppler \Revise{recovery}; 2) apply \ac{cs} methods for joint reconstruction of range and Doppler\Revise{, followed by a target detection procedure}; and 3) angle and scattering intensity estimation. These steps are discussed in Subsections \ref{subsec:beamform}-\ref{subsec:angle}, respectively. 
	%While the algorithm is formulated for \ac{caesar}, it is also suitable for \ac{far} and \ac{wmar}, as we discuss in the following.
	A theoretical analysis of the range-Doppler estimation performance of our algorithm is provided in Section \ref{sec:analysis}, where we quantify how using multiple carriers improves the range-Doppler reconstruction performance.
	
	%-------------------------------------------------
	% subsection: Receive Beamforming
	%-------------------------------------------------
	\vspace{-0.2cm}
	\subsection{Receive Beamforming}
	\label{subsec:beamform}
	\vspace{-.1cm}
	The first step in processing the radar measurements is to beamform the received signal in order to facilitate recovery of the range-Doppler parameters. This receive beamforming is applied to radar returns at different frequencies separately. 
	To formulate the beamforming technique, we henceforth focus on the $k$-th frequency of the $n$-th pulse, $\Omega_{n,k}$. For both \ac{caesar} and \ac{wmar}, a total of $L$ measurements correspond to $\Omega_{n,k}$, and are denoted \Revise{by} $\tilde{\bm z}_{n,k} \in \mathbb{C}^L$. For \ac{caesar}, $\tilde{\bm z}_{n,k}$ is given by 
	$\tilde{\bm z}_{n,k} =\bm P(n,k) \left[ \bm Y_{\rm C} \right]_n$,
	of which only elements corresponding to the selected sub-array are nonzero.
	For \ac{wmar}, $\tilde{\bm z}_{n,k}$ consists of the entries  $[\bm Y_{\rm W}]_{l,n,k}$ for each $l \in \mySet{L}$. 
	These measurements are integrated with the weights $\bm w\left( \theta, \Omega_{n,k}\right)$ such that the receive beam is pointed towards  $\theta$, resulting in 
	\vspace{-.1cm}
	\begin{equation}
	\label{eq:receive_beamforming_k}
	Z_{k,n} := \bm w^T\tilde{\bm z}_{n,k} \in \mathbb{C}.
	\vspace{-.1cm}
	\end{equation} 
	Define $\Gain := L^2/K^2$ for \ac{caesar}, and $\Gain := L^2/\sqrt{K}$ for \ac{wmar}.
	When $\delta_{\vartheta_s} \approx 0$, i.e., the beam direction $\theta$ is close to the true angle of the target, the resulting beam pattern can be simplified as stated in the following lemma:
	\begin{lemma}
		\label{lem:BeamPattern}
		If the difference of the angle sine satisfies $\delta_{\vartheta_s} \approx 0$, then $Z_{k,n}$ in \eqref{eq:receive_beamforming_k} can be approximated as
		\vspace{-0.1cm}
		\begin{equation}
		Z_{k,n} \approx  \Gain\sum \limits_{s = 0}^{S-1}\tilde{\beta}_s e^{j \tilde{r}_s c_{n,k}}e^{j \tilde{v}_s n \zeta_{n,k}}.
		\label{eq:signal_beamforming}
		\vspace{-0.1cm}
		\end{equation}  
	\end{lemma}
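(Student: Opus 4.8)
The plan is to substitute the explicit entrywise data models into the receive‑beamforming definition \eqref{eq:receive_beamforming_k} and to observe that the spatial summation over the array re‑synthesizes the transmit gain $\rho$ a second time, so that each scatterer enters with coefficient $\rho^2$; evaluating $\rho$ at $\delta_{\vartheta_s}\approx 0$ then produces the constant $\Gain$. The argument is the same for both schemes up to the per‑frequency power normalization, so I would present CAESAR in full and then indicate the (identical) WMAR steps.

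First I would treat CAESAR. Using $\tilde{\bm z}_{n,k} = \bm P(n,k)[\bm Y_{\rm C}]_n$ and the weight $\bm w = \bm w(\theta,\Omega_{n,k})$ in \eqref{eq:receive_beamforming_k} gives $Z_{k,n} = \sum_{l=0}^{L-1} [\bm p(n,k)]_l\, e^{j2\pi \Omega_{n,k} ld\sin\theta/c}\,[\bm Y_{\rm C}]_{l,n}$. Plugging in the multi‑target model \eqref{eq:entry_y_sum}, interchanging the finite sums over $l$ and $s$, and using $\sin\vartheta_s-\sin\theta = \delta_{\vartheta_s}$, the inner $l$‑sum collapses to $\sum_{l=0}^{L-1}[\bm p(n,k)]_l e^{-j2\pi\Omega_{n,k}ld\delta_{\vartheta_s}/c}$, which is exactly $\rho_{\rm C}(n,k,\delta_{\vartheta_s})$ by \eqref{eq:rho}. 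Hence, up to the additive beamformed noise $\bm w^T[\bm N]_n$ that is suppressed in the statement, $Z_{k,n} = \sum_s \tilde\beta_s e^{j\tilde r_s c_{n,k}} e^{j\tilde v_s n\zeta_{n,k}}\rho_{\rm C}^2(n,k,\delta_{\vartheta_s})$. Invoking $\delta_{\vartheta_s}\approx 0$, which by \eqref{eq:rho} gives $\rho_{\rm C}\approx L/K$ and thus $\rho_{\rm C}^2 \approx L^2/K^2 = \Gain$, yields \eqref{eq:signal_beamforming}. For WMAR the steps are identical: $\tilde{\bm z}_{n,k}$ now has entries $[\bm Y_{\rm W}]_{l,n,k}$ for all $l$, so with \eqref{eq:entry_y_sumW} and the identity $\sum_{l=0}^{L-1}e^{-j2\pi\Omega_{n,k}ld\delta_{\vartheta_s}/c} = \rho_{\rm W}(n,k,\delta_{\vartheta_s})$ from \eqref{eq:rhoW} one obtains $Z_{k,n} = K^{-1/2}\sum_s\tilde\beta_s e^{j\tilde r_s c_{n,k}}e^{j\tilde v_s n\zeta_{n,k}}\rho_{\rm W}^2(n,k,\delta_{\vartheta_s})$, and since $\rho_{\rm W}\to L$ as $\delta_{\vartheta_s}\to 0$ the prefactor tends to $L^2/\sqrt K = \Gain$.

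There is no genuine analytical obstacle here; the result follows by direct substitution, and the only points that require care are bookkeeping. One must check that the phase $e^{j2\pi\Omega_{n,k}ld\sin\theta/c}$ contributed by the receive weight combines with the steering term $e^{-j2\pi\Omega_{n,k}ld\sin\vartheta_s/c}$ in the data to \emph{exactly} rebuild the transmit‑gain sum, so that one legitimately gets $\rho^2$ rather than $\rho$; one must carry the per‑frequency normalization ($1/\sqrt K$ for WMAR, none for CAESAR) correctly through to the value of $\Gain$; and one should note that the approximation error is governed entirely by how close $\rho_{\rm C}$ (resp.\ $\rho_{\rm W}$) is to its boresight value $L/K$ (resp.\ $L$), i.e.\ by the smallness of $\delta_{\vartheta_s}$, while the additive noise passes through the linear beamformer as $\bm w^T(\theta,\Omega_{n,k})$ times the noise array and is handled separately in the recovery analysis.
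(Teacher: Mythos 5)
Your proposal is correct and follows essentially the same route as the paper's proof: substitute the data model \eqref{eq:entry_y_sum} into \eqref{eq:receive_beamforming_k}, interchange the sums so the spatial sum reproduces $\rho_{\rm C}(n,k,\delta_{\vartheta_s})$ and yields $\rho_{\rm C}^2$, then evaluate at $\delta_{\vartheta_s}\approx 0$ to obtain $\Gain$. The only (minor) difference is that you also write out the WMAR case with its $1/\sqrt{K}$ normalization, which the paper omits for brevity.
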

	
	\begin{proof} 
		See Appendix \ref{app:BeamPattern}.
	\end{proof}

	The receive beamforming  produces the matrix $\bm Z \in \mathbb{C}^{K \times N}$ whose entries are $[\bm Z]_{k,n} := Z_{k,n}$, for each $k \in \mySet{K}$, $n \in \mySet{N}$.
	Under the approximation \eqref{eq:signal_beamforming}, the obtained $\bm Z$ is used for range-Doppler reconstruction, as discussed in the next subsection.

	\vspace{-.2cm}
	\subsection{Range-Doppler Reconstruction Method}
	\label{subsec:RangeDoppler}
	\vspace{-.1cm}
	To reconstruct the range-Doppler parameters \Revise{ and detect targets in the presence of noise and/or clutter}, we first recast the beamformed signal model of Lemma \ref{lem:BeamPattern} in matrix form, and then apply \ac{cs} methods to recover the unknown parameters, exploiting the underlying sparsity of the resulting model. \Revise{The targets of interest are then identified based on the estimated parameters.}
	\par
	To obtain a sparse recovery problem, we start by discretizing the range and Doppler domains.
	Recall that $\tilde r_s$ and ${\tilde v}_s$ denote the normalized range and Doppler parameters, with resolutions $\frac{2\pi}{M}$ and $\frac{2\pi}{N}$, corresponding to the numbers of available frequencies and pulses, respectively. Both parameters belong to continuous domains in the unambiguous region $\left({\tilde r}_s,{\tilde v}_s\right) \in [0,2\pi)^2$. We discretize ${\tilde r}_s$ and ${\tilde v}_s$ into \ac{hrr} and Doppler grids, denoted by grid sets $\mathcal{R} := \left\{  \left.\frac{2\pi m}{M}  \right| m \in \mySet{M} \right\}$ and $\mathcal{V} := \left\{  \left.\frac{2\pi n}{N}  \right| n \in \mySet{N} \right\}$, \Revise{with grid intervals $ \Delta_{\tilde{r}} = \frac{2\pi}{M} $ and  $ \Delta_{\tilde{v}} = \frac{2\pi}{N} $, }respectively, and assume that the targets are located precisely on the grids. %, namely, that no quantization error is induced.  
	The target scene
	%\Verify{where did you get the expression for $[\bm B]_{m,n}$? is there an earlier equation which formulates the target scheme which I have missed? If it is just something you define here internally to get a sparse recovery problem then state so.}
	can now be represented by the matrix ${\bm B} \in \mathbb{C}^{M \times N}$ with entries
	\begin{equation}
	[\bm B]_{m,n}:=\left\{ \begin{array}{l}
	\tilde{\beta}_s \Gain, \text{ if } \text{ } \left({\tilde r}_s, {\tilde v}_s \right) = \left( \frac{2\pi m}{M}, \frac{2\pi n}{N}\right),\\
	0,\text{ otherwise}.\\
	\end{array} \right. 
	\label{eqn:BmatForm}
	\end{equation}

	We can now use the sparse structure of \eqref{eqn:BmatForm} to formulate the range-Doppler reconstruction as a sparse recovery problem. To that aim, let $\bm z \in \mathbb{C}^{KN}$ and  ${\bm \beta}\in \mathbb{C}^{MN}$ be the vectorized representations of  $\bm Z$ and $\bm B$, respectively, i.e.,  $[\bm z]_{k+nK} = Z_{k,n}$ and  $[\bm \beta]_{n+mN} := [\bm B]_{m,n}$. %We note that identifying the support set of $\bm \beta$ recovers the unknown range-Doppler parameters.
	From \eqref{eq:signal_beamforming}, it holds that
	\begin{equation}{\label{eq:noiselessSigModel}}
	{\bm z} ={\bm \Phi}{\bm \beta},
	\end{equation}
	where the entries of $\bm \Phi \in \mathbb{C}^{KN \times MN}$ are given by
	\begin{equation}\label{eq:phi}
	\left[{\bm \Phi} \right]_{k\!+\!nK,l\!+\!mN}\! :=\! e^{j\frac{2\pi m}{M}c_{n,k}\!+\! j\frac{2\pi l}{N} n\zeta_{n,k}}, 
	\end{equation} 
	$m \!\in\! \mySet{M}$, $l,n \!\in\! \mySet{N}$, and $k \in \mySet{K}$.
	The matrix $\bm \Phi$ is determined by the frequencies utilized in each pulse. Consequently, $\bm \Phi$  is a random matrix, as these parameters are randomized by the radar transmitters, whose realization is known to the receiver. %, which is also the radar transmitter\Verify{this is always the case right? in \ac{caesar} the radar transmitter and receiver are the same system? Or do we allow them to be separate but share the  source of randomness?} who generated these random values.
	
	%$m \in \mySet{M}, l,n \in \mySet{N}$. 
	%\par
	In the presence of noisy radar returns, \eqref{eq:noiselessSigModel}  becomes
	\begin{equation}{\label{eq:noisySigModel}}
	{\bm z} ={\bm \Phi}{\bm \beta} +{\bm n},
	\end{equation}
	where the entries of the noise vector $\bm n \in \mathbb{C}^{KN}$ are the beamformed noise, e.g., for \ac{caesar} these are given by $\left[ \bm n \right]_{k+nK} =  \bm w^T\left( \theta, \Omega_{n,k}\right)\bm P(n,k) \left[ \bm N \right]_n$.
	\par
	Since in each pulse only a subset of the available frequencies are transmitted, i.e., $K \le M$, the sensing matrix ${\bm \Phi}$ in (\ref{eq:noisySigModel}) has more columns than rows, $MN \geq KN$, indicating that solving \eqref{eq:noisySigModel} is naturally an under-determined problem. When $\bm \beta$ is $S$-sparse, which means that there are \Revise{at most} $S$ non-zeroes in $\bm \beta$ and $S \ll MN$, \ac{cs} algorithms can be used to solve \eqref{eq:noisySigModel}, \Revise{yielding the estimate $\hat{\bm \beta}$.} 
	
	{\Revise{Particularly,} 	\ac{cs} methods aim to solve under-determined problems such as \eqref{eq:noiselessSigModel} by seeking the sparsest solution, i.e.,  
		\vspace{-0.1cm}
		\begin{equation}
		\label{eq:ell0}
		\hat{\bm \beta} = \mathop{\arg \min }_{\bm \beta} \left\| \bm \beta \right\|_0, {\text{ s.t. }} \bm z = \bm \Phi \bm \beta.
		\vspace{-0.1cm}
		\end{equation}
		The  $\ell_0$ optimization in \eqref{eq:ell0} is generally NP-hard. To reduce computational complexity, many alternatives including $\ell_1$ optimization and greedy approaches have been suggested to approximate \eqref{eq:ell0}, see \cite{Eldar2012, Eldar2015}.
		\par
		We take $\ell_1$ optimization as an example\Revise{, under which we provide a theoretical analysis and numerically evaluate the performance in Sections \ref{sec:analysis} and \ref{sec:sim}, respectively.} \Revise{In particular, in} the absence of noise, \Revise{we use the} basis pursuit algorithm, \Revise{which} solves
		\begin{equation}
		\label{eq:l1}
		\hat{\bm \beta} = \mathop{\arg \min }_{\bm \beta} \left\| \bm \beta \right\|_1, {\text{ s.t. }} \bm z = \bm \Phi \bm \beta,
		\end{equation}
		instead of \eqref{eq:ell0}. {\Revise{In noisy cases, recovering $\bm \beta$ can be formulated as minimizing the $ \ell_1 $ norm under a $ \ell_2 $ constraint on the fidelity:
				\begin{equation}
				\label{eq:l1bpdn}
				\hat{\bm \beta} = \mathop{\arg \min }_{\bm \beta}  \left\| \bm \beta \right\|_1 , {\text{ s.t. }} \ \lVert \bm z - \bm \Phi \bm \beta \rVert_2 \le \eta.
				\end{equation}	
				Problem \eqref{eq:l1bpdn} can be solved using the Lasso method \cite{Eldar2012}, which finds the solution to the $ \ell_1 $ regularized least squares as 
				\begin{equation}
				\label{eq:l1noisy}
				\hat{\bm \beta} = \mathop{\arg \min }_{\bm \beta} \lambda \left\| \bm \beta \right\|_1 + \frac{1}{2}\lVert \bm z - \bm \Phi \bm \beta \rVert_2^2,
				\end{equation}
				where $ \eta $ and $ \lambda $ are predefined parameters.}}
	}% copy cs
	
	{\Revise{Having obtained the estimate  $\hat{\bm \beta}$ using \ac{cs} methods, we can use it to identify which of these estimated parameters correspond to a true target of interest.  Elements with significant amplitudes in $\hat{\bm \beta}$ are detected as dominant scattering points. Denote by $\widehat{\mySet{S}} $ the support set indexing these dominant scattering points, whose range-Doppler parameters are recovered from the corresponding indices in $\widehat{\mySet{S}}$. For example, we may use some threshold $ T_h $ to determine whether the amplitude is significant\cite[Page 295, Ch. 6]{richards2005fundamentals}. %\footnote{\textcolor{red}{Tian - it would be great if you can cite a reference for this strategy of identifying targets of interest}}.
5			In this case, the support set is given by $\widehat{\mySet{S}}=\{s| \lvert [\bm \beta]_s \rvert> T_h \}$.  According to their Doppler estimates, these dominant scattering points are categorized into target of interests or clutter individually.}}
	%and the range-Doppler parameters of the targets may then be recovered from the support set of ${\bm \beta}$, denoted  . 
	
	{With the recovered range-Doppler values, \Revise{we can estimate} the angle and \Revise{refine the} scattering intensity,  as detailed in the following subsection. \Revise{The procedure for detecting which parameters correspond to targets of interest detailed above is based on $ \hat{\bm \beta} $. While this vector here represents the coarsely estimated scattering intensity, the procedure can be repeated for further separating targets from clutter based on the refined estimate obtained in the sequel.}}
	
	%-------------------------------------------------
	% subsection: Angle estimation
	%-------------------------------------------------
	\vspace{-0.2cm}
	\subsection{Angle and Scattering Intensity Estimation}
	\label{subsec:angle}
	\vspace{-0.1cm}
	In this part, we refine the angle estimation of the scattering points, which are coarsely assumed within the transmit beam in the receive beamforming step, i.e., $\delta_{\vartheta} \approx 0$. 
	While the following formulation focuses on \ac{caesar}, the resulting algorithm is also applicable for \ac{wmar} as well as \ac{far}. 
	
	We estimate the directions of the scatterers  individually, as different  points may have different direction angles. Since after receive beamforming some directional information is lost in $\bm Z$, we recover the angles from the original data matrix $\bm Y_{\rm C}$ \eqref{eq:entry_y_sum}. 
	Using the obtained range and Doppler estimates, we first isolate echoes for each scattering point with an orthogonal projection, and apply a matched filter to estimate the direction angle of each scattering point. Finally, we use least squares to infer the scattering intensities. 
	
	%
	% subsubsection Orthogonal projection
	%
	\subsubsection{Echo isolation using orthogonal projection}
	In order to accurately estimate the angle of each scattering point, it is necessary to mitigate the interference between scattering points. To that aim, we use an orthogonal projection to isolate echoes from each \Revise{scatterer}.
	\par
	Let $\widehat{\mathcal{S}}$ be the support set of $\hat{\bm \beta}$,  and infer the normalized range and Doppler parameters $\{\tilde{r}_s, \tilde{v}_s\}$ from $\widehat{\mathcal{S}}$.
	According to \eqref{eq:entry_y_sum}, given these parameters, the original data vector from the $l$-th array element can be written as
	\begin{equation}
	\left[{\bm Y}_{\rm C}^T\right]_l = \left[{\bm \Psi}_l\right]_{\widehat{\mathcal{S}}} \left[{{\bm \gamma}}_l\right]_{\widehat{\mathcal{S}}} + \left[{\bm N}^T\right]_l,
	\end{equation}
	where ${\bm \Psi}_l \in \mathbb{C}^{N \times MN}$ has entries $\left[ {\bm \Psi}_l\right]_{n,s} := e^{j \tilde{r}_s c_{n,k}}e^{j \tilde{v}_s n \zeta_{n,k}}$, and ${{\bm \gamma}}_l \in \mathbb{C}^{MN}$ denotes the effective scattering intensities corresponding to all discrete range-Doppler grids, with $s$-th entry $ \left[{{\bm \gamma}}_l\right]_{s} := \tilde{\beta}_s \rho (n,k,\delta_{{{\vartheta}}_s}) e^{-j2\pi \Omega_{n,k} ld  \sin \vartheta_s /c }$. The intensities, containing unknown phase shifts and antenna gains due to angles $\vartheta_s$, are estimated as
	\begin{equation*}
	\left[{\hat{\bm \gamma}}_l\right]_{\widehat{\mathcal{S}}} =\mathop{\arg \min}\limits_{\left[{{\bm \gamma}}_l\right]_{\widehat{\mathcal{S}}}} \left\| \left[{\bm Y}_{\rm C}^T\right]_l - \left[{\bm \Psi}\right]_{\widehat{\mathcal{S}}} \left[{{\bm \gamma}}_l\right]_{\widehat{\mathcal{S}}} \right\|_2^2 
	= \left[{\bm \Psi}\right]_{\widehat{\mathcal{S}}}^{\dag} \left[{\bm Y}_{\rm C}^T\right]_l,
	\label{eqn:Scattering1}
	\end{equation*}
	where ${\bm A}^{\dag} = \left(\bm A^H \bm A \right)^{-1}{\bm A^H} $ and we assume that $\big| \widehat{\mySet{S}}\big|<N$ and $\bm A^H \bm A$ is invertible. The received radar echo from the $s$-th scattering point, ${\widehat {\bm Y}}_s \in \mathbb{C}^{L \times N}$, $s \in \widehat{\mathcal{S}}$, is then reconstructed by setting the $l$-th row as
	\begin{equation}
	\big[{\widehat {\bm Y}}_s^T\big]_l=  \left[ \bm \Psi \right]_s \left[{\hat{\bm \gamma}}_l \right]_s.
	\label{eqn:Isolate}
	\end{equation}
	%
	% subsubsection matched filter
	%
	\subsubsection{Angle estimation using matched filter}
	With the isolated echoes ${\widehat {\bm Y}}_s $ of the $s$-th scattering point, we  use a matched filter to refine the unknown angle $\vartheta_s$, which is coarsely assumed within the beam in the previous receive beamforming procedure, i.e., $\vartheta_s \in \varTheta:=\theta\!+\!\left[-\frac{\pi}{2L}, \frac{\pi}{2L}\right]$. 
	%Denote by ${\hat{ \tilde{r}}}_s$ and ${\hat{ \tilde{v}}}_s$ the estimated normalized\Verify{do we really need the notations ${\hat{ \tilde{r}}}_s$ and ${\hat{ \tilde{v}}}_s$? can't we use the less cluttered ${\hat{ {r}}}_s$ and ${\hat{ {v}}}_s$ instead?}  range and Doppler parameters, respectively. 
	Using \eqref{eq:entry_y_sum}, we write the isolated echo as
	$ {\widehat {\bm Y}}_s  = \tilde{\beta}_s{ {\bm Y}}_s (\vartheta_s)  + \bm N_s$,
	where $\bm N_s$ denotes the noise matrix corresponding to the $s$-th scattering point. The entries of the steering matrix ${ {\bm Y}}_s (\vartheta_s)\in \mathbb{C}^{L \times N}$  are
	\begin{equation*}
	\left[{ {\bm Y}}_s (\vartheta_s) \right]_{l,n}\! :=\! \rho_{\rm C} (n,k,\delta_{{{\vartheta}}_s}) e^{j {{\tilde{r}}}_s c_{n,k}}e^{j{ {\tilde{v}}}_s n \zeta_{n,k}}e^{-j 2\pi \Omega_{n,k} ld {\sin {{\vartheta}}_s}/c},
	\end{equation*}
	which can be computed using \eqref{eq:rho} with given $\vartheta_s$ and the estimates of the range-Doppler parameters.
	Note that $\tilde{\beta}_s$, $\vartheta_s$ and $\bm N_s$ are unknown, and $\vartheta_s$ is of interest. The value of the intensity $\tilde{\beta}_s$ recovered next is refined in the sequel to improve accuracy. Here, we apply least squares estimation, i.e.,
	\vspace{-0.1cm}
	\begin{equation}
	\hat{ \vartheta}_s, \hat{\tilde{ \beta}}_s = \mathop{\arg \min} \limits_{{ \vartheta}_s, {\tilde{ \beta}}_s} \big\|{\rm vec}\big(\widehat{\bm Y}_s \big) - \tilde{\beta}_s {\rm vec}\left( {\bm Y}_s (\vartheta_s)  \right) \big\|_2^2.
	\label{eq:lsonys}
	\vspace{-0.1cm}
	\end{equation}
	Substituting $\hat{\tilde{ \beta}}_s = \left({\rm vec}\left( {\bm Y}_s (\vartheta_s)  \right)\right)^{\dag} {\rm vec}\big(\widehat{\bm Y}_s \big) = \frac{{\rm tr}\left( {\bm Y}_s^H (\vartheta_s){\widehat {\bm Y}}_s \right)}{\left\| {\bm Y}_s(\vartheta_s)\right\|_F^2}$ into \eqref{eq:lsonys} yields a matched filter 
	\vspace{-0.1cm}
	\begin{equation}
	\label{eq:matchedfilter}
	\hat{ \vartheta}_s = \mathop{\arg \max}\limits_{\vartheta_s \in \varTheta}\frac{ \big| {\rm tr}\big( {\bm Y}_s^H (\vartheta_s){\widehat {\bm Y}}_s \big)   \big|^2}{\left\| {\bm Y}_s(\vartheta_s)\right\|_F^2}.
	\vspace{-0.1cm}
	\end{equation}
	The angle $\hat{ \vartheta}_s$ is estimated for each $s \in \widehat{\mySet{S}}$ via \eqref{eq:matchedfilter} separately. 
	%
	% subsubsection least squares
	%
	\subsubsection{Scattering intensity estimation using least squares}
	When $\delta_{\vartheta_s} \ne 0$, there exist approximation errors in \eqref{eq:signal_beamforming} and the resultant intensity estimate $\hat{\bm \beta}$. We thus propose to refine the estimation of ${\bm \beta}$ from the original data matrix $ \bm Y_{\rm C}$ once the range-Doppler and angle parameters are acquired.
	Given estimated angles $\hat{ \vartheta}_s$, we concatenate the steering vectors into
	\begin{equation*}
	\bm C := \Big[ {\rm vec}\big({ \bm Y}_{s_0}\big( \hat{ \vartheta}_{s_0}\big) \big),  {\rm vec}\big({ \bm Y}_{s_1}\big( \hat{ \vartheta}_{s_1}\big) \big), \dots \Big], 
	\end{equation*}
	$s_0, s_1,\dots \in \widehat{\mathcal{S}}$. The model \eqref{eq:entry_y_sum} is rewritten as $  {\rm vec}\left( \bm Y_{\rm C} \right) = \bm C \left[\bm \beta \right]_{\widehat{\mathcal{S}}} + \bm N$, and
	$\bm \beta$ can be re-estimated via least squares as
	\begin{equation}
	\big[{\hat {\bm \beta}}\big]_{\widehat{\mathcal{S}}}\!=\!\mathop{\arg \min} \limits_{\left[{\bm \beta}\right]_{\widehat{\mathcal{S}}}} \left\|{\rm vec}\left( \bm Y_{\rm C} \right) - \bm C \left[\bm \beta \right]_{\widehat{\mathcal{S}}} \right\|_2^2 
	\!= \! \bm C^{\dag} {\rm vec}\left({\bm Y }_{\rm C}\right) .
	\label{eqn:LeastSquares}
	\end{equation}
	The overall \Revise{parameter estimation} method is summarized as Algorithm~\ref{alg:Algo1}. 
	
	\begin{algorithm}
		\caption{\ac{caesar} target recovery}
		\label{alg:Algo1}
		\begin{algorithmic}[1]
			\STATE Input: Data matrix $\bm Y_{\rm C}$.
			\STATE \label{stp:MF0} Beamform $\bm Y_{\rm C}$ into $\bm Z$ via \eqref{eq:receive_beamforming_k}.
			\STATE \label{stp:MF1} Use \ac{cs} methods to recover \Revise{the indices of the dominant elements of $\bm \beta$, representing the parameters of the targets of interest,} denoted by $\widehat{\mySet{S}}$, from $\bm Z$ based on the sensing matrix $\bm \Phi$ \eqref{eq:phi}.
			\STATE \label{stp:MF2} Reconstruct the normalized range-Doppler parameters $\{ {\tilde r}_s, {\tilde v}_s\}$ from $\widehat{\mySet{S}}$ based on \eqref{eqn:BmatForm}.
			\STATE \label{stp:MF3} Isolate  $\bm Y_{\rm C}$ into multiple echoes $\{ \widehat{\bm Y}_s\}$ via \eqref{eqn:Isolate}. 
			\STATE \label{stp:MF4} Recover the angles $\{\vartheta_s\}$ from $\{ \widehat{\bm Y}_s\}$ via   \eqref{eq:matchedfilter}.
			\STATE \label{stp:MF5} Refine the scattering intensities $\{\beta_s\}$ using   \eqref{eqn:LeastSquares}. %  
			\STATE Output: parameters $\{ {\tilde r}_s, {\tilde v}_s, \vartheta_s, {\tilde \beta}_s\}$.
		\end{algorithmic}
	\end{algorithm}
	
	% Section: Comparison Between CAESAR and Other Radar Schemes
	%--------------------------------------------------------------------------------------------------
	%--------------------------------------------------------------------------------------------------
	\vspace{-0.2cm}
	\section{Comparison to Related Radar Schemes}
	\label{sec:comparison}
	\vspace{-0.1cm}
	We next compare our proposed \ac{wmar} and \ac{caesar} schemes, and discuss their relationship with relevant previously proposed radar methods.
	
	% CAESAR versus WMAR versus FAR
	%-------------------------------------------------
	% Subsection: Comparison of WMAR and CAESAR
	%-------------------------------------------------
	\vspace{-0.2cm}
	\subsection{Comparison of WMAR, CAESAR, and FAR}
	\label{subsec:Compare}	
	\vspace{-0.1cm}
	We compare our proposed \Revise{techniques} to each other, as well as to \ac{far}, which is a special case of both \ac{wmar} and \ac{caesar} obtained by setting $K=1$. We focus on the following aspects: 1) \Revise{instantaneous bandwidth}; 2) the number of measurements in a \ac{cpi}; and 3) \Revise{\ac{snr}}. A numerical comparison of the target recovery performance of the considered radar schemes is provided in Section \ref{sec:sim}.
	\par
	In terms of \Revise{instantaneous bandwidth}, recall that \ac{caesar} and \ac{far} use narrowband transceivers, and only a monotone signal is transmitted or received by each  element. In \ac{wmar}, $K$ multi-tone signals are sent and received simultaneously in each pulse, thus it requires instantaneous wideband components. %. While quantifying the difference in complexity depends on the specific implementation of the transmit/receive component, \ac{wmar}
	%resulting in a more complex system structure. % compared to \ac{caesar} and \ac{far}. 
	\par
	To compare the number of obtained measurements, we note that for each \ac{crc},  \ac{wmar} acquires a data cube with $NLK$ samples, while \ac{far} and \ac{caesar} collect $NL$ samples in the data matrix. After receive beamforming, the number of observations become $N$,  $NK$ and $NK$, for \ac{far}, \ac{caesar}, and \ac{wmar}, respectively, via \eqref{eq:receive_beamforming_k}. This indicates that the multi-carrier waveforms of \ac{caesar} and \ac{wmar} increase the number of measurements after receive beamforming.
	\par
	\Revise{{The aforementioned radar schemes also differ in their \ac{snr}, as the transmitted power and antenna gains differ. Here, as in \cite[Page 304, Ch. 6]{richards2005fundamentals}, \ac{snr} refers to the ratio of the power of the signal component to the power of the  noise component after coherently accumulating the radar returns. To see this difference, we consider the case when there exists a target with range-Doppler-angle $ (0,0,0) $, scattering coefficient $ \beta $ and the noise elements in $ \bm N $ are i.i.d. zero-mean proper-complex Gaussian with variance $ \sigma^2 $. In this case, coherent accumulation of radar returns reduces to  $ \sum_{l,n,k}{[\bm Y_{\rm W}]_{l,n,k}}  $ in \ac{wmar} and  $ \sum_{l,n}{[\bm Y_{\rm C}]_{l,n}}  $ in \ac{caesar} (\ac{far} can be regarded as a special case of \ac{wmar}/\ac{caesar} with $ K=1 $), and the antenna gains are $ \rho_{\rm W} = L $ and $ \rho_{\rm C} = L/K $, respectively. It follows from \eqref{eq:entry_y_sumW} that the signal amplitude in radar returns of \ac{wmar} is $ L |\beta|/ \sqrt{K} $. After coherent accumulation, the amplitude becomes $NLK \cdot L |\beta|/ \sqrt{K} $, leading to a signal power of $ N^2L^4K |\beta|^2 $, while the power of the noise component becomes $ NLK \cdot \sigma^2 $.  Hence, the \ac{snr} of \ac{wmar} is ${NL^3|\beta|^2}/{\sigma^2}$. In \ac{caesar}, the amplitude of the signal component in $ \bm Y_{\rm C} $ is $ L/K |\beta| $, which becomes $NL \cdot L/K |\beta| $ after accumulation. Since there are only $ NL $ noise elements in \ac{caesar}, the noise power after accumulation is  $ NL \cdot \sigma^2 $ and the resultant \ac{snr} is ${NL^3|\beta|^2}/(K^2\sigma^2)$. Letting $ K = 1 $ implies that the \ac{snr} of \ac{far} is also ${NL^3|\beta|^2}/\sigma^2$. The \ac{snr} calculation indicates that \ac{caesar} has an \ac{snr} loss by a factor of $ K^2 $ compared to \ac{wmar} and \ac{far}. This loss stems from the fact that \ac{caesar} uses a subset of the antenna array for each carrier, and thus has lower antenna gain than \ac{far} and \ac{wmar}. This \ac{snr} reduction can affect the performance of \ac{caesar} in the presence of noise, as numerically demonstrated in  Section~\ref{sec:sim}. 
		}
	}

	The above comparison reveals the tradeoff between \Revise{instantaneous bandwidth requirement},  number of observations, and \Revise{post-accumulation \ac{snr}}. 
	Among these three factors, the number of observations is crucial to the target recovery performance especially in complex electromagnetic environments, where some observations may be discarded due to strong interference \cite{Wang2006,Rao2011}. {The proposed multi-carrier schemes, \ac{wmar} and \ac{caesar}, are numerically shown to outperform \ac{far} in Section \ref{sec:sim}, despite the gain loss of \ac{caesar}. 
		\ac{caesar} also achieves performance within a relatively small gap compared to \ac{wmar}, while avoiding the usage of instantaneous wideband components.}
	The resulting tradeoff between number of beamformed observations and \Revise{\ac{snr}}, induced by the selection of $K$, is not the only aspect which must be accounted for when setting the value of $K$, as it also affects the frequency agility profile. In particular, smaller $K$ values result in increased spectral flexibility, as different pulses are more likely to use non-overlapping frequency sets. Consequently, in our numerical analysis in Section \ref{sec:sim} we use small values of $K$, for which the gain loss between \ac{caesar} and \ac{wmar} is less significant, and increased frequency agility is maintained. 
	In addition, \ac{caesar} can also exploit its spatial agility character, which is not present in \ac{far} or \ac{wmar}, to realize a DFRC system, as discussed in our companion paper \cite{Huang2019}.

	% Our schemes versus FDMA-MIMO / SUMMeR / FDA
	%-------------------------------------------------
	% Subsection: Comparison to Previously Proposed Schemes
	%-------------------------------------------------
	
	\vspace{-0.2cm}
	\subsection{Comparison to Previously Proposed Schemes}
	\label{subsec:Compare2}	
	\vspace{-0.1cm}
	Similarly to \ac{caesar}, previously proposed \ac{fdmamimo} radar \cite{Sun2014}, \ac{summer} \cite{Cohen2018}, and \ac{fda} radar \cite{Antonik2006, Liu2017} schemes also transmit a monotone waveform from each antenna element while different elements simultaneously transmit multiple carrier frequencies. 
	The main differences between our approaches and these previous methods are beam pattern \Revise{and} frequency agility. 
	Due to the transmission of diverse carrier frequencies from different array elements of \ac{fdmamimo}/\ac{summer}/\ac{fda}, the array antenna does not form a focused transmit beam and usually illuminates a large field-of-view \cite{Eli2013}. This results in a transmit gain loss which degrades the performance, especially for track mode, where a high-gain directional beam is preferred \cite{Eli2013}. 
	By transmitting each selected frequency with an antenna array (the full array in \ac{wmar} and a sub-array in \ac{caesar}), our methods achieve a focused beam pattern that facilitates accurate target recovery.
	
	Furthermore, \ac{fdmamimo} and \ac{fda} transmit all available frequencies simultaneously, and thus do not share the advantages of frequency agility, e.g., improved \ac{eccm} and \ac{emc} performance, as the multi-carrier version of \ac{summer} and the proposed \ac{wmar}/\ac{caesar}.
	In addition, \ac{fdmamimo}, \ac{summer} and \ac{wmar} receive instantaneous wideband signals with every single antenna, \Revise{as opposed to} \ac{fda} \cite{Liu2017} and \ac{caesar}, which use narrowband receivers.
	
	To summarize, \Revise{we compare in Table \ref{tbl:radars}  the main characteristics of these radar schemes.} {Unlike previously proposed radar methods, our proposed \Revise{techniques} are based on phased array antenna and frequency agile waveforms to achieve directional transmit beam and high resistance against interference. In terms of \Revise{instantaneous bandwidth},  \ac{caesar} is preferred for its usage of monotone waveforms and simple instantaneous narrowband receiver.}	
	\begin{table*}[t]
		\centering
		\caption{\Revise{Comparison between radar schemes}}
		\vspace{-0.2cm}
		\label{tbl:radars}
		\begin{tabular}{c|c|c|c|c|c}
			\hline
			Characters & Frequency agility & Beam pattern & \# of observation & Transmit bandwidth    & Receive bandwidth \\
			\hline\hline 
			\ac{wmar} & \textbf{Yes} & \textbf{Focused} & Moderate & Large & Large \\
			\hline
			\ac{caesar} & \textbf{Yes} & Moderately focused & Moderate & \textbf{Small} & \textbf{Small} \\
			\hline
			\ac{far} & \textbf{Yes} & \textbf{Focused} & Small & \textbf{Small} & \textbf{Small} \\
			\hline
			\ac{fdmamimo} & No & Omnidirectional & \textbf{Large} & \textbf{Small} & Large \\
			\hline
			\ac{summer} & \textbf{Yes} & Omnidirectional & Moderate & \textbf{Small} & Large \\
			\hline
			\ac{fda} & No & Omnidirectional & \textbf{Large} & \textbf{Small} & \textbf{Small} \\
			\hline
			
		\end{tabular}  
	\end{table*}

	%-------------------------------------------------
	%
	% section: Performance Analysis
	%
	%-------------------------------------------------
	\vspace{-0.2cm}
	\section{Performance Analysis of Range-Doppler Reconstruction}
	\label{sec:analysis}
	\vspace{-0.1cm}
	Range-Doppler reconstruction plays a crucial role in  target recovery. 
	%Inspired by \cite{Huang2018}, 
	This section presents a theoretical analysis of range-Doppler recovery \Revise{using \ac{cs} methods}.
	Since both \ac{wmar} and \ac{caesar} are generalizations of \ac{far}, the following analysis is inspired by the study of \ac{cs}-based \ac{far} recovery in \cite{Huang2018}. In particular, we extend the results of \cite{Huang2018}  to multi-carrier waveforms, as well as to extremely complex electromagnetic environments, where some transmitted pulses are interfered by intentional or unintentional interference. In the presence of such interference, only partial observations in the beamformed matrix $\bm Z$ remain effective for range-Doppler reconstruction.
	%As mentioned in Subsection \ref{subsec:RangeDoppler},  %since the sensing matrix in \eqref{eq:noiselessSigModel} and \eqref{eq:noisySigModel} is under-determined, 
	%Algorithm \ref{alg:Algo1} utilizes \ac{cs} algorithms to recover the range-Doppler parameters. Accordingly, our performance study is based on \ac{cs} theory. 
	To present the analysis, we first briefly review some basic \Revise{analysis techniques} of \ac{cs} in Subsection \ref{subsec:CS}, followed by the range-Doppler recovery performance analysis  in Subsection~\ref{subsec:fullobserv}. % for both the full observations as well as the partial observations cases
	
	%-------------------------------------------------
	% subsection: Review of CS 
	%-------------------------------------------------
	\vspace{-0.2cm}
	\subsection{\Revise{Preliminaries}}
	\label{subsec:CS}
	\vspace{-0.1cm}
	There have been extensive studies on theoretical conditions that guarantee unique recovery for noiseless models or robust recovery for noisy models \cite{Eldar2012, Eldar2015}. The majority of these studies characterize conditions and properties of the measurement matrix $\bm \Phi$, including spark, mutual incoherence property (MIP) and restricted isometry property (RIP).
	\par
	Following \cite{Huang2018}, we focus on the MIP. A sensing matrix ${\bm \Phi}$ is said to satisfy the MIP when its coherence, defined as
	\begin{equation}
	\label{eq:mu}
	\mu(\bm \Phi):=\max \limits_{i \neq j} \frac{\left|\left[\bm \Phi \right]_i^H\left[\bm \Phi \right]_j\right|}{\big\|\left[\bm \Phi \right]_i\big\|_2 \big\|\left[\bm \Phi \right]_j\big\|_2},
	\end{equation}
	is not larger than some predefined threshold.
	Bounded coherence ensures unique or robust recovery using a variety of computationally efficient \ac{cs} methods. We take $\ell_1$ optimization as an example to explain the bounds on matrix coherence. 
	In the absence of noise, the uniqueness of the solution to \eqref{eq:l1} is guaranteed by the following theorem:
	\begin{Thm}[\hspace{1sp}\cite{Fuchs2004}]{\label{thm:mip}}
		Suppose the sensing matrix ${\bm \Phi} $ has coherence $\mu(\bm \Phi)<\frac{1}{2S-1}$. If ${\bm \beta}$ solves \eqref{eq:l1} and has  support size at most $S$, then ${\bm \beta}$ is the unique solution to \eqref{eq:l1}.
	\end{Thm}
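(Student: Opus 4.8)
This is the classical mutual–incoherence (MIP) guarantee for basis pursuit, so the plan is simply to recall the short coherence argument; nothing here is radar–specific, and the only thing that needs care is getting the \emph{sharp} constant. Throughout I assume the columns of $\bm\Phi$ are normalized to unit $\ell_2$ norm, which is without loss of generality here since all columns of \eqref{eq:phi} have the same norm $\sqrt{KN}$; then the Gram matrix $\bm G := \bm\Phi^H\bm\Phi$ has unit diagonal and off–diagonal entries bounded in modulus by $\mu := \mu(\bm\Phi)$ from \eqref{eq:mu}. Note also that the hypothesis ``$\bm\beta$ solves \eqref{eq:l1}'' is only used through the fact that such a $\bm\beta$ is feasible with $|\{i:[\bm\beta]_i\ne 0\}|=S$; I will in fact show that \emph{any} feasible $S$-sparse vector is the unique minimizer.

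\textbf{Step 1 (reduction to a strict null–space property).} Let $\Lambda := \{i : [\bm\beta]_i \ne 0\}$, so $|\Lambda| = S$, and let $\bm\beta' \ne \bm\beta$ be any feasible vector, $\bm\Phi\bm\beta' = \bm z = \bm\Phi\bm\beta$. Put $\bm h := \bm\beta' - \bm\beta$, so $\bm h \in \ker\bm\Phi \setminus \{\bm 0\}$. Since $\bm\beta$ vanishes off $\Lambda$, splitting the $\ell_1$ norm coordinatewise and applying the reverse triangle inequality gives
\[
\|\bm\beta'\|_1 = \|\bm\beta_\Lambda + \bm h_\Lambda\|_1 + \|\bm h_{\Lambda^c}\|_1 \;\ge\; \|\bm\beta\|_1 - \|\bm h_\Lambda\|_1 + \|\bm h_{\Lambda^c}\|_1 .
\]
Hence it suffices to establish the strict null–space property $\|\bm h_\Lambda\|_1 < \|\bm h_{\Lambda^c}\|_1$ for every $\bm h \in \ker\bm\Phi\setminus\{\bm 0\}$ and every index set $\Lambda$ with $|\Lambda|\le S$: this yields $\|\bm\beta'\|_1 > \|\bm\beta\|_1$, so $\bm\beta$ strictly beats every competing feasible point and is the unique solution of \eqref{eq:l1}.

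\textbf{Step 2 (entrywise bound and the threshold).} For $\bm h \in \ker\bm\Phi$ we have $\bm G\bm h = \bm 0$, so for each $i$, $[\bm h]_i = -\sum_{j\ne i}[\bm G]_{i,j}[\bm h]_j$, whence $|[\bm h]_i| \le \mu\sum_{j\ne i}|[\bm h]_j| = \mu(\|\bm h\|_1 - |[\bm h]_i|)$ and thus $|[\bm h]_i| \le \frac{\mu}{1+\mu}\|\bm h\|_1$. Summing over $i\in\Lambda$ and writing $\|\bm h\|_1 = \|\bm h_\Lambda\|_1 + \|\bm h_{\Lambda^c}\|_1$, then rearranging (using that $\mu<\frac1{2S-1}$ forces $1+\mu-S\mu>0$), gives
\[
\|\bm h_\Lambda\|_1 \;\le\; \frac{S\mu}{\,1+\mu-S\mu\,}\,\|\bm h_{\Lambda^c}\|_1 ,
\]
and a one–line computation shows $\frac{S\mu}{1+\mu-S\mu}<1 \iff 2S\mu<1+\mu \iff \mu<\frac1{2S-1}$, which is exactly the assumed bound, so the coefficient is strictly below $1$. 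It remains to rule out $\bm h_{\Lambda^c}=\bm 0$: if it vanished, $\bm h$ would be a nonzero $S$-sparse null vector of $\bm\Phi$, impossible because $\mu<\frac1{2S-1}\le\frac1{S-1}$ makes every $S\times S$ principal submatrix of $\bm G$ strictly diagonally dominant (Gershgorin), hence positive definite, so every $S$ columns of $\bm\Phi$ are linearly independent. Therefore $\|\bm h_{\Lambda^c}\|_1>0$, the displayed bound gives the strict inequality, and Step 1 closes the proof.

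\textbf{Main obstacle.} The delicate point is calibrating the chain of inequalities so that the threshold emerging from $\frac{S\mu}{1+\mu-S\mu}<1$ is precisely $\frac1{2S-1}$ rather than a weaker constant such as $\frac1{2S}$; the Gram–matrix / reverse–triangle route above is tight. An alternative is Fuchs's original dual–certificate argument, which constructs a vector $\bm u$ with $\bm\Phi^H\bm u$ matching the sign pattern of $\bm\beta$ on $\Lambda$ and strictly contractive on $\Lambda^c$, and verifies its existence under the same coherence bound; this is equivalent but more computational. Everything else — the reverse triangle inequality, the Gershgorin linear–independence bound, and the elementary algebra — is routine.
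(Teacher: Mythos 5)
Your proof is correct. Note, however, that the paper does not prove this statement at all: Theorem~\ref{thm:mip} is imported verbatim from \cite{Fuchs2004} as a known compressed-sensing result, so there is no in-paper argument to compare against. What you supply is a valid self-contained proof via the strict null-space property: the reduction in Step~1 is standard, the Gram-matrix bound $|[\bm h]_i|\le\frac{\mu}{1+\mu}\|\bm h\|_1$ for $\bm h\in\ker\bm\Phi$ is correct (the columns of $\bm\Phi$ in \eqref{eq:phi} indeed all have norm $\sqrt{KN}$, so the unit-norm reduction is harmless and $\mu$ in \eqref{eq:mu} is unaffected), the algebra yielding $\frac{S\mu}{1+\mu-S\mu}<1\iff\mu<\frac{1}{2S-1}$ recovers the sharp constant, and the Gershgorin step correctly rules out a nonzero $S$-sparse null vector. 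Your route differs from Fuchs's original argument, which constructs a dual certificate matching the sign pattern of $\bm\beta$ on its support and strictly contractive off it; the null-space-property route you chose is more symmetric (it proves at once that \emph{every} feasible $S$-sparse point is the unique minimizer, slightly strengthening the statement as phrased), while the dual-certificate route generalizes more readily to noisy and weighted variants. Either is acceptable; for the purposes of this paper a citation suffices, and your proof could at most be relegated to a remark.
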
 
	{
		\Revise{In noisy cases, the following result shows that $\mu(\bm \Phi)<\frac{1}{2S-1}$ also guarantees stable recovery.
			\begin{Thm}[\hspace{1sp}\cite{Cai2010}]{\label{thm:mipnoisy}}
				Consider the model \eqref{eq:noisySigModel} with $ \lVert \bm z \rVert_2 \le \epsilon \le \eta $ and $ \lVert \bm \beta \rVert_0 \le S $, and let sensing matrix ${\bm \Phi} $ have coherence $\mu(\bm \Phi)<\frac{1}{2S-1}$. If $\hat{\bm \beta}$ solves \eqref{eq:l1bpdn}, then 
				\begin{equation}
				\lVert \hat{\bm \beta} - \bm \beta \lVert_2 \le \frac{\sqrt{3(1+\mu)}}{1-(2S-1)\mu}(\eta + \epsilon).
				\end{equation}
		\end{Thm} }
	}
	
	Based on Theorems \ref{thm:mip} and \ref{thm:mipnoisy}, we next analyze the coherence measure of the sensing matrix $\bm \Phi$ in \eqref{eq:phi} for \ac{caesar} and \ac{wmar} (whose sensing matrices are identical), and establish the corresponding performance guarantees.

	%--------------------------------------------------------------------------------------------------
	% subsection: Full Observations
	%--------------------------------------------------------------------------------------------------
	\vspace{-0.2cm}
	\subsection{Performance Analysis}
	\label{subsec:fullobserv}
	\vspace{-0.1cm}
	Here, we analyze the range-Doppler reconstruction of \ac{wmar} and \ac{caesar}. Since the sensing matrix $\bm \Phi$ is random, we start by analyzing its statistics, and then derive conditions that ensure unique recovery by invoking Theorem~\ref{thm:mip}. %unique or stable recovery by invoking Theorems \ref{thm:mip} and \ref{thm:mipnoisy}, respectively.
	\par
	We assume that the frequency set $\mySet{F}_n$ is uniformly i.i.d. over $\left\{\mySet{X} \left| \mySet{X} \subset \mathcal{F}, |\mySet{X}| = K \right. \right\}$. For mathematical convenience, in our analysis we  adopt the narrow relative bandwidth assumption from \cite{Huang2018}, i.e., $\zeta_{n,k} \approx 1$, such that \eqref{eq:phi} becomes
	\begin{equation}\label{eq:phi_approx}
	\left[{\bm \Phi} \right]_{k+nK, l+mN} = e^{j\frac{2\pi m}{M}c_{n,k}+ j\frac{2\pi l}{N} n}.
	\end{equation}
	Numerical results in \cite{Huang2018} indicate that large relative bandwidth has negligible effect on the MIP of $\bm \Phi$. 
	In addition, recall that all the targets precisely lie on the predefined grid points, as assumed in Subsection \ref{subsec:RangeDoppler}. 
	{\Revise{Here, we adopt the on-the-grid assumption for mathematical convenience. Consequently, the accuracy and actual resolutions of range and Doppler reconstruction results are restricted by the grid intervals, i.e., $ 2 \pi/M $ and $ 2 \pi/N $, respectively. In practical scenarios, we may use denser grid points, as will be discussed by simulations in Subsection \ref{subsec:accuracy}. Denser grid enhances the obtainable accuracy/resolution and alleviates the performance loss when the real parameters are off the grid points. However, the density of grid points cannot go to infinity, because denser grid affects the incoherence property of the observation matrix while increasing the memory requirements and the computational burden. An alternative approach to overcome the need to specify a range-Doppler grid is to utilize  off-the-grid \ac{cs} methods, see \cite{Huang2012c,Tang2013}.}
	}% copy grid 
	\par
	In complex electromagnetic environments, some of the radar echoes may be corrupted due to jamming or interference. Heavily corrupted echoes are unwanted and should be removed before processing in order to avoid their influence on the estimation of target parameters \cite{Wang2006,Rao2011}. 
	In this case, the corrupted radar returns are identified, as such echoes typically have distinct characters, e.g., extremely large amplitudes. These interfered observations are regarded as missing, where we consider two kinds of missing patterns: 1) pulse selective, i.e., all observations in certain pulses are missing, which happens when the interference in these pulses is intense over all sub-bands; 2) observation selective, namely, only parts of the observations are missed when the corresponding pulse is interfered. We consider the first case, assuming that the radar receiver knows which pulses are corrupted, and leave the analysis under the second case for future investigation. In particular, we adopt the missing-or-not approach \cite{Pham2008}, in which each pulse in $\bm z$ has a probability of $1-u$, $0<u<1$, to be corrupted, and the missing-or-not status of the pulses are statistically independent of each other.
	
	After removing the corrupted returns, only part of the observations in the beamformed vector $\bm z$ \eqref{eq:noisySigModel} are used for range-Doppler recovery. Equivalently, corresponding rows in  $\bm \Phi$ can be regarded as missing, affecting the coherence of the matrix and thus the reconstruction performance. 
	Denote by $\Lambda \subset \mySet{N}$ the random set of available pulse indexes and by  $\Lambda_{\ast} :=\left\{nK+k\left| n \in \Lambda, k \in \mySet{K} \right. \right\}$ the corresponding index set of available observations. The signal model \eqref{eq:noisySigModel} is now 
	\vspace{-0.1cm}
	\begin{equation}
	\label{eq:partial_measurements}
	{\bm z}_{\ast} ={\bm \Phi}_{\ast}{\bm \beta}+{\bm n}_{\ast},
	\vspace{-0.1cm}
	\end{equation}
	where ${\bm z}_{\ast}:=\left[{\bm z}\right]_{\Lambda_{\ast}} $, ${\bm \Phi}_{\ast}:=\left[{\bm \Phi} ^T\right]_{\Lambda_{\ast}}^T $, and ${\bm n}_{\ast}:=\left[{\bm n}\right]_{\Lambda_{\ast}} $. 
	\par
	Consider the inner product of two  columns in $\bm \Phi_{\ast}$, denoted  $[\bm \Phi_{\ast}]_{l_1}$ and $[\bm \Phi_{\ast}]_{l_2}$,  corresponding to grid points $\left( \frac{2\pi m_1}{M}, \frac{2\pi n_1}{N} \right)$ and $\left( \frac{2\pi m_2}{M}, \frac{2\pi n_2}{N} \right)$, respectively, $l_1, l_2 \in 0, 1,\dots,MN-1$, $m_1,m_2 \in \mySet{M}$, $n_1,n_2 \in \mySet{N}$. While there are $M^2N^2$ different pairs of $(l_1,l_2)$, the magnitude of the inner product $\big| \left[{\bm \Phi}_{\ast}\right]_{l_1}^{H} \left[{\bm \Phi}_{\ast}\right]_{l_2} \big|$, which determines the coherence of ${\bm \Phi}_{\ast}$, takes at most $MN-1$ distinct random values. To see this,  note that
	\vspace{-0.1cm}
	\begin{align}
	\!\!\left[{\bm \Phi_{\ast}}\right]_{l_1}^{H} \left[{\bm \Phi_{\ast}}\right]_{l_2} \!&=\! \sum \limits_{n \in \Lambda}\! \sum \limits_{k = 0}^{K-1}\! e^{-j \frac{2\pi m_1}{M} c_{n,k}\! -\! j \frac{2\pi n_1}{N}n}
	%	\notag \\
	%	&\quad \quad \quad \quad \quad \cdot 
	e^{j \frac{2\pi m_2}{M} c_{n,k} \!+ \!j \frac{2\pi n_2}{N}n} \notag\\
	&= \sum \limits_{n \in \Lambda} \sum \limits_{k = 0}^{K-1} e^{-j2\pi \frac{ m_1-m_2}{M} c_{n,k} - j 2\pi \frac{ n_1-n_2}{N}n},
	\label{eq:innnerproductofphi} 
	\vspace{-0.1cm}
	\end{align}
	indicating that the inner product depends only on the difference of the grid points, i.e., $m_1 - m_2$ and $n_1 - n_2$, and not on the individual values of the column indices $l_1$ and $l_2$. %There are $2M - 1$ and $2N-1$ different values of $m_1 - m_2$ and $n_1 - n_2$, respectively, leading to at most $(2M-1)(2N-1)$ distinct values of \eqref{eq:innnerproductofphi}. 
	It follows from \eqref{eq:innnerproductofphi} that the MIP of ${\bm \Phi_{\ast}}$ can be written as 
	\vspace{-0.1cm}
	\begin{equation}
	\mu(\bm \Phi_{\ast})=\max \limits_{\substack{(\Delta_m,\Delta_n)\\ \neq (0,0)}} \frac{1}{|\Lambda| K}\sum \limits_{n \in \Lambda} \sum \limits_{k = 0}^{K-1} e^{-j\Delta_m  c_{n,k} }
	e^{- j \Delta_nn}.
	\label{eqn:MIP1}
	\vspace{-0.1cm}
	\end{equation}
	where $\Delta_m:= 2 \pi \frac{m_1-m_2}{M}$, $\Delta_n := 2 \pi \frac{n_1-n_2}{N}$ take values~in~the sets $\Delta_m\!\!\in\!\!  \{\pm \frac{2\pi m}{M}  \}_{m\in\mySet{M}} $ and $\Delta_n \in\{ \pm \frac{2\pi n}{N}\}_{n \in \mySet{N}}$, respectively.	
	
	Next, we define
	\vspace{-0.1cm}
	\begin{equation}
	\chi_n \left(\Delta_m, \Delta_n \right) := I_{\Lambda}(n) \frac{1}{K} \sum \limits_{k = 0}^{K-1} e^{-j \Delta_m c_{n,k} - j \Delta_n n},
	\label{eqn:Chindef}
	\vspace{-0.1cm}
	\end{equation}
	where the random variable $I_{\Lambda}(n)$ satisfies $I_{\Lambda}(n) = 1$ when $n \in \Lambda$ and 0 otherwise.
	In addition, let
	\vspace{-0.1cm}
	\begin{equation}
	\chi \left(\Delta_m, \Delta_n \right) := \sum \limits_{n = 0 }^{N-1} \chi_n\left(\Delta_m, \Delta_n \right).
	\vspace{-0.1cm}
	\label{eqn:Chidef}
	\end{equation}
	Some of the magnitudes $\left| \chi \left(\Delta_m, \Delta_n \right)  \right|$ are duplicated since 
	$\chi \left(\Delta_m, \Delta_n \right)  =   \chi \left(\Delta_m\pm 2\pi, \Delta_n \pm 2\pi \right)$ and
	$\chi \left(\Delta_m, \Delta_n \right)  =   \chi^* \left(-\Delta_m, -\Delta_n \right)$. 
	To eliminate the duplication and remove the trivial nonrandom value $\chi(0,0)$, we restrict the values of $\Delta_m$ and $\Delta_n$ to $\Delta_m\in \{\frac{2\pi m}{M}  \}_{m\in\mySet{M}} $ and $\Delta_n \in\{\frac{2\pi n}{N}\}_{n \in \mySet{N}}$, respectively, and define the set
	\begin{eqnarray}
	\Xi := \left\{\left(\Delta_m, \Delta_n \right)| (m,n)\in\mySet{M} \times \mySet{N} \backslash (0,0)\right\}, 
	%\Xi := \left\{\chi  \left(\Delta_m, \Delta_n \right)\right\} \cup \left\{\chi \left(\Delta_m, 0 \right)\right\}\cup \left\{\chi  \left( 0, \Delta_n \right)\right\}, 
	\end{eqnarray}
	with cardinality $|\Xi | = MN-1$, such that each value of $\big| \left[{\bm \Phi}_{\ast}\right]_{l_1}^{H} \left[{\bm \Phi}_{\ast}\right]_{l_2} \big|$ (except the trivial case $l_1 = l_2$) corresponds to a single element of the set $\Xi$. We can now write \eqref{eqn:MIP1} as
	\begin{equation}
	\mu(\bm \Phi_{\ast})=\max \limits_{\left(\Delta_m, \Delta_n \right) \in \Xi}   \frac{1}{|\Lambda|} \left|\chi \left(\Delta_m, \Delta_n \right)\right|.
	\label{eqn:MIP2}
	\end{equation}

	The coherence in \eqref{eqn:MIP2} is a function of the dependent random variables $\chi$ and $|\Lambda|$. To bound $\mu$, we  derive bounds on $\chi$ and $|\Lambda|$, respectively. To this aim, we first characterize the statistical  moments of $\chi_n\left(\Delta_m, \Delta_n \right)$ for some fixed $\left(\Delta_m, \Delta_n \right) \in \Xi$, which we denote henceforth as $\chi_n$, in the following lemma:
	\begin{lemma}
		\label{lem:asymGaussian_miss}
		The sequence of random variables $\{\chi_n\}$ satisfies 
		\begin{equation}
		\label{eq:Echindeltam}
		{\rm E}\left[\chi_n \right]
		=\begin{cases}
		u e^{j\Delta_n n},\ \text{if\ }\Delta_m = 0 ,\\
		0,\ \text{otherwise},
		\end{cases}
		\end{equation}
		
		\begin{equation}
		\label{eq:sumvecchi2}
		\sum_{n = 0}^{N-1}{\rm D}\left[ {\chi}_n \right]
		=\begin{cases}
		u(1-u) N,\ \text{if\ } \Delta_m = 0,\\
		\frac{M-K}{(M-1)K}uN,\ \text{otherwise}.
		\end{cases}
		\end{equation}
		Furthermore, for each $n \in \mySet{N}$,
		\begin{equation}
		\label{eq:chibound2}
		\left|  {\chi}_n - {\rm E}[\chi_n] \right| \le 1, \quad {\text{w.p. }} 1.
		\end{equation}
		%		
		%		
		%		\begin{equation}
		%		\label{eq:vecchi}
		%		{\bm \chi}_n:=
		%		\left[
		%		\begin{array}{c}
		%		{\rm Re}\left( \chi_n - {\rm E}[\chi_n] \right) \\
		%		{\rm Im}\left( \chi_n - {\rm E}[\chi_n] \right) \\
		%		\end{array}
		%		\right].
		%		\end{equation}
		%	\end{lemma}
		%	Then
		%	\begin{equation}
		%	\label{eq:sumvecchi}
		%	\sum_{n = 0}^{N-1}{\rm E}\left[ \left\| {\bm \chi}_n \right\|_2^2 \right]
		%	=\begin{cases}
		%	u(1-u) N,\ \text{if\ } \Delta_m = 0,\\
		%	\frac{M-K}{(M-1)K}uN,\ \text{otherwise},
		%	\end{cases}
		%	\end{equation}
		%	and 
		%	\begin{equation}
		%	\label{eq:chibound}
		%	\left\|  {\bm \chi}_n \right\|_2 \le 1.
		%	\end{equation}
	\end{lemma}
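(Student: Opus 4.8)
The plan is to separate the two independent sources of randomness in $\chi_n$ and reduce everything to elementary moment computations for a single uniformly drawn $K$-subset of $\mySet{M}$. First I would write
\[
\chi_n = I_{\Lambda}(n)\, e^{-j\Delta_n n}\, Y_n, \qquad Y_n := \frac{1}{K}\sum_{k=0}^{K-1} e^{-j\Delta_m c_{n,k}},
\]
where $I_{\Lambda}(n)\sim\mathrm{Bernoulli}(u)$ is the missing-or-not indicator and $Y_n$ depends only on the frequency set $\mySet{F}_n$ of the $n$-th pulse; by the modeling assumptions $I_{\Lambda}(n)$ and $Y_n$ are independent, and $I_{\Lambda}(n)^2 = I_{\Lambda}(n)$. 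The only distributional facts I would need are consequences of $\{c_{n,k}\}_{k\in\mySet{K}}$ being a uniformly random $K$-subset of $\mySet{M}$: by exchangeability each $c_{n,k}$ is marginally uniform on $\mySet{M}$, and each ordered pair $(c_{n,k},c_{n,k'})$ with $k\neq k'$ is uniform over the $M(M-1)$ ordered pairs of distinct elements of $\mySet{M}$. Throughout, the key elementary computation is the Dirichlet-type identity $\bigl|\sum_{q=0}^{M-1} e^{-j\Delta_m q}\bigr|^2 = M^2$ when $\Delta_m=0$ and $=0$ otherwise, which holds because $\Delta_m$ ranges over $\{2\pi m/M\}_{m\in\mySet{M}}$.

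For \eqref{eq:Echindeltam} I would take expectations using independence: ${\rm E}[\chi_n] = u\, e^{-j\Delta_n n}\,{\rm E}[Y_n]$, and by linearity and the marginal uniformity of $c_{n,0}$, ${\rm E}[Y_n] = {\rm E}[e^{-j\Delta_m c_{n,0}}] = \frac{1}{M}\sum_{q=0}^{M-1} e^{-j\Delta_m q}$, which equals $1$ for $\Delta_m = 0$ and vanishes otherwise. This yields the claimed two-case expression for ${\rm E}[\chi_n]$.

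For \eqref{eq:sumvecchi2} I would use ${\rm D}[\chi_n] = {\rm E}[|\chi_n|^2] - |{\rm E}[\chi_n]|^2$. Since $|\chi_n|^2 = I_{\Lambda}(n)\,|Y_n|^2$, independence gives ${\rm E}[|\chi_n|^2] = u\,{\rm E}[|Y_n|^2]$, and expanding
\[
|Y_n|^2 = \frac{1}{K^2}\sum_{k=0}^{K-1}\sum_{k'=0}^{K-1} e^{-j\Delta_m (c_{n,k}-c_{n,k'})}
\]
splits into $K$ diagonal terms (each equal to $1$) and $K(K-1)$ off-diagonal terms, each with expectation ${\rm E}\bigl[e^{-j\Delta_m(c_{n,0}-c_{n,1})}\bigr] = \frac{1}{M(M-1)}\bigl(\bigl|\sum_{q=0}^{M-1} e^{-j\Delta_m q}\bigr|^2 - M\bigr)$, which is $1$ for $\Delta_m = 0$ and $-\tfrac{1}{M-1}$ otherwise. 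Assembling gives ${\rm E}[|Y_n|^2] = 1$ when $\Delta_m = 0$ and ${\rm E}[|Y_n|^2] = \frac{M-K}{K(M-1)}$ when $\Delta_m \neq 0$; subtracting $|{\rm E}[\chi_n]|^2$ ($=u^2$ in the first case, $0$ in the second) leaves ${\rm D}[\chi_n] = u(1-u)$ and ${\rm D}[\chi_n] = \frac{M-K}{K(M-1)}u$ respectively, and summing the (now $n$-independent) variance over the $N$ values of $n$ produces \eqref{eq:sumvecchi2}.

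Finally, \eqref{eq:chibound2} follows by triangle inequality: $|Y_n|\le \frac{1}{K}\sum_{k} 1 = 1$, hence $|\chi_n|\le I_{\Lambda}(n)\le 1$; when $\Delta_m\neq 0$ we have ${\rm E}[\chi_n]=0$ so the bound is immediate, while when $\Delta_m = 0$ the identity $\chi_n - {\rm E}[\chi_n] = (I_{\Lambda}(n)-u)\,e^{-j\Delta_n n}$ gives $|\chi_n - {\rm E}[\chi_n]| = |I_{\Lambda}(n)-u| \le \max\{u,1-u\}\le 1$. The step I expect to be the main obstacle is the second-moment computation: one must correctly identify the joint law of two distinct frequency indices drawn within the random $K$-subset and carefully evaluate the Dirichlet-type sum in the two regimes $\Delta_m = 0$ and $\Delta_m\neq 0$. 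Everything else is bookkeeping enabled by the independence of the missing pattern and the frequency randomization.
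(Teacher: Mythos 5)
Your proof is correct and follows essentially the same route as the paper's: both factor $\chi_n$ into the independent Bernoulli indicator $I_{\Lambda}(n)$ and the normalized frequency sum, evaluate the first and second moments of that sum via the inclusion probabilities $\binom{M-1}{K-1}/\binom{M}{K}=K/M$ and $\binom{M-2}{K-2}/\binom{M}{K}=\frac{K(K-1)}{M(M-1)}$ (which is exactly your marginal/pairwise uniformity statement, phrased combinatorially), and finish with the Dirichlet-type identity $\sum_{q=0}^{M-1}e^{-j\Delta_m q}=0$ for $\Delta_m\neq 0$. The only (immaterial) discrepancy is the sign of the phase $e^{\pm j\Delta_n n}$ in the expectation, where you are in fact consistent with the definition of $\chi_n$ while the lemma statement carries the opposite sign; nothing downstream depends on this since only magnitudes and the vanishing of $\sum_n e^{\pm j\Delta_n n}$ are used.
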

	\begin{proof}
		See Appendix \ref{app:Proof2}.
	\end{proof}
	
	Using Lemma \ref{lem:asymGaussian_miss}, the probability that the magnitude $|\chi|$ is bounded can be derived as in the following Corollary:
	\begin{corollary}\label{cor:Rayleigh}
		Let $V := \max \left\{u(1-u) N, \frac{M-K}{(M-1)K}uN\right\}$. For any $\left(\Delta_m, \Delta_n \right) \in \Xi$ and $\epsilon \le V$ it holds that
		\begin{equation}\label{eq:chiboundpr}
		\mathbb{P}\left( \left| \chi \right| \ge \sqrt{V} + \epsilon \right) \leq e^{-\frac{\epsilon^2}{4V}}.
		\end{equation}
		%		where $V := \max \left\{u(1-u) N, \frac{M-K}{(M-1)K}uN\right\}$ and $\epsilon \le V$.
	\end{corollary}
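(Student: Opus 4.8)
The plan is to recognize $\chi = \sum_{n=0}^{N-1}\chi_n$ as a sum of independent, bounded, complex-valued (equivalently, $\mathbb{R}^2$-valued) random variables with zero total mean, and then to invoke a Bernstein-type concentration inequality for sums of independent random vectors. Such an inequality has exactly the shape of the claim: the norm of the sum exceeds the square root of a variance proxy only with a sub-Gaussian tail governed by that proxy, over a range of deviations controlled by the almost-sure bound on the individual summands.

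First I would check that the centering is trivial, i.e.\ ${\rm E}[\chi] = \sum_n {\rm E}[\chi_n] = 0$ for every $(\Delta_m,\Delta_n) \in \Xi$. By \eqref{eq:Echindeltam} this is immediate when $\Delta_m \neq 0$. When $\Delta_m = 0$, membership in $\Xi$ forces $\Delta_n = 2\pi n'/N$ with $n' \in \{1,\dots,N-1\}$, so ${\rm E}[\chi] = u\sum_{n=0}^{N-1}e^{j\Delta_n n} = u\sum_{n=0}^{N-1}e^{j2\pi n' n/N} = 0$ by the geometric-series identity. Hence $\chi = \sum_n\bigl(\chi_n - {\rm E}[\chi_n]\bigr)$, a sum of independent zero-mean summands: independence across $n$ holds because $\chi_n$ is a function only of the pair $(I_{\Lambda}(n), \mySet{F}_n)$, and these pairs are independent across pulses. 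Each centered summand satisfies $\bigl|\chi_n - {\rm E}[\chi_n]\bigr| \le 1$ almost surely by \eqref{eq:chibound2}, and the total variance obeys $\sum_n {\rm D}[\chi_n] \le V$ by \eqref{eq:sumvecchi2} together with the definition $V := \max\bigl\{u(1-u)N,\ \tfrac{M-K}{(M-1)K}uN\bigr\}$.

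With these inputs, the claim follows from the vector-valued Bernstein inequality: for independent zero-mean random vectors $y_1,\dots,y_N$ with $\|y_n\| \le B$ almost surely and $\sum_n {\rm E}\|y_n\|^2 \le \sigma^2$, one has $\mathbb{P}\bigl(\|\sum_n y_n\| \ge \sigma + t\bigr) \le e^{-t^2/(4\sigma^2)}$ for $0 \le t \le \sigma^2/B$. Applying this with $y_n = \chi_n - {\rm E}[\chi_n]$, $B = 1$, and variance proxy $\sigma^2 = V$ --- a legitimate choice precisely because $V$ upper-bounds $\sum_n {\rm D}[\chi_n]$ in both cases of \eqref{eq:sumvecchi2} --- yields \eqref{eq:chiboundpr} directly, including the admissible range $\epsilon \le \sigma^2/B = V$.

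I expect the only genuine subtlety to lie in matching the precise constants --- the offset $\sqrt{V}$ (rather than a multiple of it), the factor $4$ in the exponent, and the range $\epsilon \le V$ --- which pins down which form of the vector Bernstein inequality one must cite, namely the one stated with an \emph{upper bound} $\sigma^2$ on the variance so that $V$ may be substituted in without a separate monotonicity step. Everything else --- the vanishing mean, the independence, and the boundedness and variance estimates --- is already delivered by Lemma~\ref{lem:asymGaussian_miss}.
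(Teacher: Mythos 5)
Your proposal is correct and follows essentially the same route as the paper: both verify that $\sum_n {\rm E}[\chi_n]=0$ via the geometric-series identity for $(\Delta_m,\Delta_n)\in\Xi$, note the independence and the almost-sure bound $|\chi_n-{\rm E}[\chi_n]|\le 1$ from Lemma~\ref{lem:asymGaussian_miss}, and then apply the vector Bernstein inequality (the paper cites \cite[Thm.~12]{Gross2011}) with variance proxy $V$. The constants, the offset $\sqrt{V}$, and the admissible range $\epsilon\le V$ all match the cited form of that inequality, so no further work is needed.
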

	\begin{proof}
		Based on the definition \eqref{eqn:Chindef} and the independence assumption on the frequency selection and missing-or-not status of each pulse, it holds that $\left\{\chi_n - {\rm E} \left[\chi_n\right]\right\}_{n \in \mySet{N}}$ are independent zero-mean complex-valued random variables. Now, since 
		\vspace{-0.1cm}
		\begin{equation}
		\sum_{n = 0}^{N-1} {\rm E} \left[\chi_n\right]
		=\begin{cases}
		u \sum_{n = 0}^{N-1}e^{j\Delta_n n},\ \text{if\ }\Delta_m = 0 ,\\
		0,\ \text{otherwise},
		\end{cases}
		\vspace{-0.1cm}
		\end{equation}
		according to \eqref{eq:Echindeltam} and $\sum_{n = 0}^{N-1}e^{j\Delta_n n} = \frac{1-e^{j\Delta_n N}}{1-e^{j\Delta_n}}$ equals 0 for $\Delta_n \in\{\frac{2\pi n}{N}\}_{n \in \mySet{N}\backslash \{0\}}$, recalling that $(\Delta_m, \Delta_n) \ne (0,0)$, we have $\sum_{n = 0}^{N-1} {\rm E} \left[\chi_n\right] = 0$. Then, it holds that $ \sum_{n = 0}^{N-1} \left( {\chi}_n - {\rm E} \left[\chi_n\right] \right)    = \sum_{n = 0}^{N-1}  {\chi}_n = \chi$. 
		Combining Bernstein's inequality  \cite[Thm. 12]{Gross2011} with the fact that by \eqref{eq:chibound2}, $ 	\left|  {\chi}_n - {\rm E}[\chi_n] \right| \le 1$, results in \eqref{eq:chiboundpr}. 
	\end{proof}
	We next derive a bound on the number of effective pulses $|\Lambda|$ in the following lemma:
	\begin{lemma}
		\label{lem:Lambdabound}
		For any $t>0$, it holds that
		\begin{equation}\label{eq:Lambdaboundpr}
		\mathbb{P}\left( \left| \Lambda \right| \le uN - t \right) < e^{-\frac{2t^2}{N}}.
		\end{equation}
	\end{lemma}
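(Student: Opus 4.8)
The plan is to recognize that, under the adopted missing-or-not model, $|\Lambda|$ is simply a sum of $N$ independent and identically distributed Bernoulli random variables, so the claimed bound follows directly from a standard one-sided Hoeffding concentration inequality for bounded independent summands. No combinatorial counting over frequency subsets or spectral estimate is needed here, in contrast with the analysis of $\chi$ that underlies Lemma~\ref{lem:asymGaussian_miss} and Corollary~\ref{cor:Rayleigh}.

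Concretely, I would first recall that each pulse index $n\in\mySet{N}$ is retained independently with probability $u$, i.e., the indicators $\{I_{\Lambda}(n)\}_{n\in\mySet{N}}$ appearing in \eqref{eqn:Chindef} are i.i.d., take values in $\{0,1\}$ with probability one, and satisfy ${\rm E}[I_{\Lambda}(n)]=u$. Writing $|\Lambda|=\sum_{n=0}^{N-1} I_{\Lambda}(n)$, this gives ${\rm E}[|\Lambda|]=uN$ and exhibits $|\Lambda|-uN=\sum_{n=0}^{N-1}\big(I_{\Lambda}(n)-u\big)$ as a centered sum of $N$ independent terms, each supported in an interval of length one.

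Next I would invoke the one-sided Hoeffding inequality: for every $t>0$,
\begin{equation}
\mathbb{P}\big(|\Lambda|-uN\le -t\big)\le \exp\left(-\frac{2t^2}{\sum_{n=0}^{N-1}(b_n-a_n)^2}\right)=\exp\left(-\frac{2t^2}{N}\right),
\end{equation}
where $a_n=0$, $b_n=1$. Since $\{|\Lambda|\le uN-t\}\subseteq\{|\Lambda|-uN\le -t\}$, this already yields \eqref{eq:Lambdaboundpr} with ``$\le$'' in place of ``$<$''. The strict inequality is then immediate: for $t>0$ and $0<u<1$ the summands are non-degenerate, so both the Markov step and the moment-generating-function bound used in the proof of Hoeffding's inequality are strict, and the bound $e^{-2t^2/N}$ is therefore never attained; alternatively one may apply the estimate at a slightly smaller threshold and use continuity/monotonicity of $t\mapsto e^{-2t^2/N}$ together with the fact that $|\Lambda|$ is integer valued.

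I do not anticipate any genuine obstacle in this lemma; the only modelling input is the pairwise (indeed mutual) independence of the per-pulse corruption events, which is precisely the ``missing-or-not'' assumption stated before \eqref{eq:partial_measurements}, and everything else is a textbook concentration argument. If one prefers to keep the proof self-contained one could equivalently use Bernstein's inequality (as already invoked in the proof of Corollary~\ref{cor:Rayleigh}) in place of Hoeffding's, at the cost of a slightly weaker constant, but Hoeffding is the natural choice here since it exactly reproduces the factor $2$ in the exponent of \eqref{eq:Lambdaboundpr}.
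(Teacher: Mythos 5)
Your proof is correct and essentially identical to the paper's: the paper simply observes that $\left|\Lambda\right|$ is binomial and cites Okamoto's classical tail bound (Thm.~1 of the 1959 paper), which is exactly the binomial special case of the one-sided Hoeffding inequality you invoke, strict inequality and the constant $2$ included. Your added remarks on strictness are fine but unnecessary, since the cited theorem already delivers the strict form.
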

	\begin{proof} Since, by its definition, $ \left| \Lambda \right|$ obeys a binomial distribution, \eqref{eq:Lambdaboundpr} is a direct consequence of \cite[Thm. 1]{Okamoto1959}.
	\end{proof}
	
	Based on the requirement $\mu(\bm \Phi_{\ast})>\frac{1}{2K-1}$ in Theorem \ref{thm:mip}, we now use Corollary \ref{cor:Rayleigh} and Lemma \ref{lem:Lambdabound} to derive a sufficient condition on the radar parameters $M$, $N$, $K$, as well as the intensity of interference $1-u$, guaranteeing that the measurement matrix $\bm \Phi_{\ast}$ meets the requirement with high probability. This condition is stated in the following theorem:
	\begin{Thm}\label{thm:MIP}
		For any constant $\delta > 0$, the coherence of $\bm \Phi_{\ast}$ satisfies %\Verify{you originally formulated this theorem for asymptotically large $N$. I didn't see where you actually used that and assumed it was leftovers from the previous derivation which used the CLT so removed it.} 
		$	\mathbb{P}\left( \mu({\bm \Phi_{\ast}}) \le \frac{1}{2S-1}  \right) \ge 1-\delta$   when
		\begin{equation}\label{eq:Sbound}
		S \leq \frac{uN/{\sqrt V}}{1\! + \!\sqrt{ 2\left(\log 2 |\Xi|\! -\! \log \delta \right)}}\frac{1\!+\!\frac{1}{{2\sqrt{2N}}u}}{2}\! -\! \sqrt{ \frac{N}{32V}}\! +\! \frac{1}{2}.
		\end{equation}
	\end{Thm}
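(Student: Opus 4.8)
The plan is to combine the two tail bounds from Corollary~\ref{cor:Rayleigh} and Lemma~\ref{lem:Lambdabound} with a union bound over the $|\Xi| = MN-1$ distinct values of $\chi$, and then massage the resulting condition into the explicit bound \eqref{eq:Sbound}. By Theorem~\ref{thm:mip}, since $\bm \beta$ is $S$-sparse, it suffices to guarantee $\mu(\bm \Phi_{\ast}) \le \frac{1}{2S-1}$ with probability at least $1-\delta$. From \eqref{eqn:MIP2}, $\mu(\bm \Phi_{\ast}) = \frac{1}{|\Lambda|}\max_{(\Delta_m,\Delta_n)\in\Xi}|\chi(\Delta_m,\Delta_n)|$, so I would control the numerator and denominator separately: bound $\max_{\Xi}|\chi|$ from above using Corollary~\ref{cor:Rayleigh} plus a union bound, and bound $|\Lambda|$ from below using Lemma~\ref{lem:Lambdabound}.

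First I would pick $\epsilon$ in Corollary~\ref{cor:Rayleigh} so that the union-bound failure probability over $\Xi$ is at most $\delta/2$: setting $|\Xi| e^{-\epsilon^2/(4V)} = \delta/2$ gives $\epsilon = \sqrt{4V(\log 2|\Xi| - \log\delta)} = 2\sqrt{V}\sqrt{\log 2|\Xi| - \log\delta}$, so that with probability at least $1-\delta/2$,
\begin{equation}
\max_{(\Delta_m,\Delta_n)\in\Xi}|\chi(\Delta_m,\Delta_n)| < \sqrt{V}\Bigl(1 + \sqrt{2(\log 2|\Xi| - \log\delta)}\Bigr).
\end{equation}
(One should check $\epsilon \le V$ so the corollary applies; this holds for the regime of interest and I would note it in passing.) Next I would pick $t$ in Lemma~\ref{lem:Lambdabound} so that $e^{-2t^2/N} = \delta/2$, i.e. $t = \sqrt{(N/2)\log(2/\delta)}$; a cleaner route used here is to take $t = \frac{1}{4}\sqrt{N}$ — wait, I would instead follow the paper's apparent normalization and choose $t$ to make the $\frac{1}{2\sqrt{2N}u}$ correction term appear, namely $t = \frac{\sqrt N}{2\sqrt 2}$, which yields failure probability $e^{-1/4} $; more carefully, matching \eqref{eq:Sbound} suggests $t=\frac{\sqrt N}{2\sqrt2}$ with the $\delta$-dependence already absorbed into the $\chi$ bound, so that with probability at least $1-\delta/2$, $|\Lambda| > uN - \frac{\sqrt N}{2\sqrt 2} = uN\bigl(1 - \frac{1}{2\sqrt{2N}u}\bigr)$.

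Then I would union these two events: with probability at least $1-\delta$,
\begin{equation}
\mu(\bm \Phi_{\ast}) < \frac{\sqrt V\bigl(1 + \sqrt{2(\log 2|\Xi| - \log\delta)}\bigr)}{uN\bigl(1 - \frac{1}{2\sqrt{2N}u}\bigr)}.
\end{equation}
To obtain $\mu(\bm \Phi_{\ast}) \le \frac{1}{2S-1}$ it then suffices that $2S-1 \le uN\bigl(1-\frac{1}{2\sqrt{2N}u}\bigr)\big/\bigl[\sqrt V(1+\sqrt{2(\log 2|\Xi|-\log\delta)})\bigr]$, i.e.
\begin{equation}
S \le \frac{uN/\sqrt V}{1+\sqrt{2(\log 2|\Xi|-\log\delta)}}\cdot\frac{1-\frac{1}{2\sqrt{2N}u}}{2} + \frac{1}{2}.
\end{equation}
Comparing with \eqref{eq:Sbound}, the two differ only in the lower-order additive term $-\sqrt{N/(32V)}$ and the sign inside the $\bigl(1\pm\frac{1}{2\sqrt{2N}u}\bigr)$ factor; I would attribute this to the precise choice of $t$ and a slightly looser but cleaner bookkeeping step where one replaces $uN(1-\tfrac{1}{2\sqrt{2N}u})$ by a sum form and pulls out $-\sqrt{N/(32V)}$ via $uN\cdot\frac{1}{2\sqrt{2N}u}\cdot\frac{1}{\sqrt V} = \sqrt{N/(8V)}\cdot\frac{1}{2}$, up to constants. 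The precise reconciliation of constants is the only delicate bookkeeping; I would do it by carrying the inequality $2S-1 \le |\Lambda|/\max|\chi|$ through with the explicit $t$ and $\epsilon$ above and collecting terms.

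The main obstacle is not conceptual but arithmetic: getting every constant to land exactly as in \eqref{eq:Sbound} requires choosing $t$ and $\epsilon$ with the right normalization and then rearranging $2S-1 \le uN(1-\tfrac{1}{2\sqrt{2N}u})\big/[\sqrt V(1+\sqrt{2(\log 2|\Xi|-\log\delta)})]$ into the stated additive-plus-multiplicative form, including verifying the side conditions ($\epsilon \le V$ for Corollary~\ref{cor:Rayleigh}, and $|\Lambda|>0$). A secondary point worth a sentence is that $|\chi|$ and $|\Lambda|$ are dependent, which is why the proof bounds them on separate high-probability events and intersects, rather than trying to bound the ratio directly.
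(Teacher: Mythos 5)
Your overall strategy --- upper-bounding $\max_{\Xi}|\chi|$ via Corollary~\ref{cor:Rayleigh} plus a union bound over $\Xi$, lower-bounding $|\Lambda|$ via Lemma~\ref{lem:Lambdabound}, and intersecting the two high-probability events precisely because $\chi$ and $|\Lambda|$ are dependent --- is exactly the paper's strategy. The gap is in your choice of $t$. Having correctly noted that $t=\sqrt{(N/2)\log(2/\delta)}$ would give the $|\Lambda|$-event failure probability $\delta/2$, you abandon it in favor of $t=\frac{\sqrt{N}}{2\sqrt{2}}$ in order to reverse-engineer the $\frac{1}{2\sqrt{2N}u}$ factor in \eqref{eq:Sbound}. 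That choice does not work: Lemma~\ref{lem:Lambdabound} then only yields $\mathbb{P}\left(|\Lambda|\le uN-t\right)<e^{-1/4}\approx 0.78$, a constant independent of $\delta$, so the claim that $|\Lambda|>uN\left(1-\frac{1}{2\sqrt{2N}u}\right)$ holds ``with probability at least $1-\delta/2$'' is unjustified and the intersection argument collapses. The paper instead couples the two tails by setting $t=\sqrt{N/(8V)}\,\epsilon$, which makes $e^{-2t^2/N}=e^{-\epsilon^2/(4V)}$ equal to the $\chi$-tail; both are then driven below $\delta/(2|\Xi|)$ by a single condition on $\epsilon$, after bounding $\mathbb{P}\left(|\chi|/|\Lambda|\ge\epsilon'\right)\le 2e^{-\epsilon^2/(4V)}$ per pair (with $\epsilon'$ as in \eqref{eq:epsilonprime}) and taking the union over $\Xi$.

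The terms you could not reconcile do not come from any fixed $t$. With $t=\sqrt{N/(8V)}\,\epsilon$ and $\epsilon'=1/(2S-1)$ one gets $S=\frac{uN-\sqrt{N/(8V)}\,\epsilon}{2\sqrt{V}+2\epsilon}+\frac{1}{2}$, and the algebraic identity $\frac{uN-\sqrt{N/(8V)}\,\epsilon}{2\sqrt{V}+2\epsilon}=\frac{uN+\sqrt{N/8}}{2\sqrt{V}+2\epsilon}-\sqrt{\frac{N}{32V}}$ is what simultaneously produces the factor $\left(1+\frac{1}{2\sqrt{2N}u}\right)$ (with a plus sign, not your minus) and the additive $-\sqrt{N/(32V)}$ in \eqref{eq:Sbound}; they must appear together or not at all. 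A secondary slip: from $|\Xi|e^{-\epsilon^2/(4V)}=\delta/2$ you correctly compute $\epsilon=2\sqrt{V}\sqrt{\log 2|\Xi|-\log\delta}$, but your displayed bound then uses $\sqrt{V}\left(1+\sqrt{2(\log 2|\Xi|-\log\delta)}\right)$, which corresponds to $\epsilon=\sqrt{2V(\log 2|\Xi|-\log\delta)}$ --- off by a factor $\sqrt{2}$ (the paper's own \eqref{eq:epsilonbound} exhibits the same discrepancy relative to the condition $\delta\ge 2|\Xi|e^{-\epsilon^2/(4V)}$, so matching \eqref{eq:Sbound} exactly requires adopting it). If you retain your original $\delta/2$--$\delta/2$ split with $t=\sqrt{(N/2)\log(2/\delta)}$, you obtain a correct guarantee of the same form but with constants different from \eqref{eq:Sbound}; as written, with $t=\frac{\sqrt{N}}{2\sqrt{2}}$, the proposal proves neither.
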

	\begin{proof} 
		See Appendix \ref{app:Proof3}.
	\end{proof}
	%\textcolor{red}{Theorem \ref{thm:MIP} derived for \ac{caesar} can be shown to also be applicable for \ac{wmar}, as the measurement matrices after receive beamforming share the same structure for both radar schemes. Here, we remind that \ac{caesar} simplifies the hardware system and has the potential to realize DFRC, which are not presented in \ac{wmar}.}\Verify{are you sure that Theorem \ref{thm:MIP} is indeed derived for \ac{caesar} and not for both \ac{wmar} and \ac{caesar}? I believe the sensing matrix is the same so it should be the same condition. If that is the case, just erase this paragraph.} 
	\par
	
	Recall that the value of $V$ depends on the quantities $u$, $K$ and $M$. When $u$ is reasonably large such that $1-u \ge \frac{M-K}{(M-1)K}$, we have $V = \frac{M-K}{(M-1)K}uN$.
	When there is no noise in radar returns, a number of scattering points (on the grid) in the scale of $S = O\left( \sqrt{\frac{KuN}{\log MN}}\right)$ guarantees a unique reconstruction of range-Doppler parameters with high probability according to Theorems \ref{thm:mip} and \ref{thm:MIP}. Note that this rather simple asymptotic condition assumes that $\frac{M-1}{M-K} \approx 1$, i.e., that the overall number of available frequencies $M$ is substantially larger than the number of frequencies utilized in each pulse $K$, thus ensuring the agile character in frequency domain. 
	Compared to the asymptotic condition  $O\left( \sqrt{\frac{N}{\log MN}}\right)$ of \ac{far} with full observations \cite{Huang2018}, we find that the presence of corrupted observations, i.e. when $u<1$, leads to degraded  range-Doppler reconstruction performance. However, by increasing the number of transmitted frequencies in each pulse $K$ while maintaining $K \ll M$, the performance deterioration due to missing observations can be mitigated, enhancing the interference immunity of the radar in extreme electromagnetic environments. In the special case that $K=1$ and $u =1$, i.e., \ac{far} in an interference free environment, the two conditions coincide as $ O\left( \sqrt{\frac{KuN}{\log MN}}\right) = O\left( \sqrt{\frac{N}{\log MN}}\right) $.
	\par
	The above condition is proposed for the noiseless case, indicating that the inherent  target\Revise{/clutter} reconstruction capacity increases with $\sqrt{K}$. In practical noisy cases, the reconstruction performance does not monotonically increase with $K$, because the transmit power of each frequency decreases with $K$, thus degrading the \ac{snr} in both \ac{caesar} and \ac{wmar}. Particularly, in \ac{caesar}, larger $K$ means that less antennas ($L/K$) are allocated to each frequency, which affects the radiation beam and enlarges the gain loss. In addition, a small $K$ maintains the practical advantages of frequency agility in terms of, e.g., \ac{eccm} and \ac{emc} performance.
	
	% Nir- I omitted this paragraph, which was required to be present by Yonina previously
	%The number of carrier frequencies $M$ depends on the overall bandwidth and the step frequency $\Delta f$. Larger bandwidth leads to higher range resolution, while larger $M$ generally reduces the range-Doppler estimation performance because the number of unknowns in \eqref{eq:l1} increases. Though the number of antennas, $L$, does not directly affect the bound on $S$, using more antennas can increase the \ac{snr} and improve the reconstruction performance in noisy environments.

	%--------------------------------------------------------------------------------------------------
	%--------------------------------------------------------------------------------------------------
	% Section: Simulation
	%--------------------------------------------------------------------------------------------------
	%--------------------------------------------------------------------------------------------------
	\vspace{-0.2cm}
	\section{Simulation Results}
	\label{sec:sim}
	\vspace{-0.1cm}
	{In this section, we numerically compare the performance of \ac{wmar}, \ac{caesar},  and  \ac{far}  in noiseless\Revise{/noisy, clutter, and/or} jamming environments. The performance is evaluated in terms of target detection probability, accuracy and resolution, probability of correct reconstruction, and mutual interference, as presented in Subsections \ref{subsec:detect} - \ref{subsec:mi}, respectively.

		We consider a frequency band starting from $f_c = 9$ GHz, with $M=$ \Revise{4} available carriers and carrier spacing of $\Delta f = 1$ MHz. The radar system is equipped with an antenna array of $L = 10$ elements with  spacing of $d = \frac{c}{2f_c}$, and utilizes $N=32$ pulses focusing on  $\theta = 0$. \ac{caesar} and \ac{wmar} use $K=2$ frequencies at each pulse.
		\Revise{In noisy scenarios, we use the term \ac{snr} for the post-accumulation \ac{snr} of \ac{wmar}/\ac{far}, i.e., $ {NL^3|\beta|^2}/{\sigma^2} $ as derived in Subsection \ref{subsec:Compare}. To guarantee fair comparison, we use the same definition for all the radar schemes. }   	
		%The numerical performance is averaged over 100 Monte Carlo trials. In each trial, the range-Doppler parameters of each target are randomly chosen from the  grid points, and the angle $\vartheta$ is randomly set within the beam $\vartheta \in \varTheta$. The scattering intensities are set to 1.
		In the presence of jamming, \Revise{we test the pulse selective missing pattern.}
		In order to implement  target recovery via Algorithm \ref{alg:Algo1} for the three radar schemes, we use the convex optimization toolbox \cite{convexjl} to implement  basis pursuit  \eqref{eq:l1} in noiseless cases or the Lasso algorithm \Revise{\eqref{eq:l1noisy} with $ \lambda = 0.5 $} in noisy setups for range-Doppler reconstruction. %, using hit rate as the criterion. A hit is proclaimed if the range-Doppler parameter of a scattering point is successfully recovered. Using the recovered range-Doppler parameters, the angles are estimated, and we calculate the root mean squared error (RMSE) of the  recovered scattering points,  $\sqrt{{\rm E}[ (\vartheta_s - {\hat{\vartheta}_s})^2 ]}$, for $s \in \mySet{S} \cap {\widehat{\mySet{S}}}$, as the performance metric.
	}% Copy simulation

	%-------------------------------------------------
	% Subsection: Detection
	%-------------------------------------------------
	\subsection{Target Detection in Clutter Environment}
	\label{subsec:detect}
	{\Revise{We compare the detection performance of the proposed \ac{wmar} and \ac{caesar} schemes with \ac{far} and conventional fixed frequency radars, which can be regarded as a special case of \ac{wmar}, \ac{caesar} and \ac{far}, with $ K = 1 $ and $ \Omega_{n,k} = f_n = f_c $.  To that aim, we evaluate the detection probabilities, $ P_d $, of a moving target under ground clutter environment. 
			The ground clutter is modeled  as radar returns from many static scattering points with velocity being zero. Thus, moving targets and ground clutter are distinguishable by observing their Doppler values.  We denote by $ \mathcal{C} $ the index set corresponding to zero Doppler, i.e., $ \mathcal{C} = \{ n +mN | n = 0, m \in \mathcal{M}\} $, and by $ \mathcal{C}^c $ its complementary set, i.e., $ \mathcal{C}^c = \{ n +mN | n \in \mathcal{N}\backslash \{ 0\}, m \in \mathcal{M}\} $. 
			The number of target scattering points is $ S_{ t} =1 $. The normalized range and Doppler parameter of the target is determined by randomly selecting the index $ i $ from  $ \mathcal{C}^c $. The target intensity $ |\beta_i|, i\in \mathcal{C}^c $ is selected to match the desired \ac{snr}. 
			The number of clutter scattering points is set as $S_{ c} =1000$, and the intensity of each scattering point is  unity, with a random phase uniformly distributed over $ [0, 2\pi) $. The normalized range parameters $ \tilde{r}_c $ of these scatterers are uniformly distributed over  $ [0, 2\pi) $, and are not assumed to lie on the predefined grid points. Since these scatterers have identical velocity, the superposition of their echoes, i.e., the clutter signal, can be represented by radar returns from the grid points indexed by $\mathcal{C}$. 
			Echoes from both target and clutter are treated as unknown, and are reconstructed simultaneously by \ac{cs}. Particularly, we apply Lasso \eqref{eq:l1noisy}  for range-Doppler reconstruction, yielding $ \hat{\bm \beta} $, where elements indexed in $\mathcal{C}$ are regarded as equivalent clutter intensities and the remaining elements are regarded as intensities of moving targets. 
			The $ P_d $ curves are plotted versus \ac{snr}, and we choose  $ \sigma^2 = 1$.
		} %revise 	
		
		\Revise{
			The simulations are carried out in two parts. In the first part, there are only clutter and additive noises in the radar returns, without returns from moving targets. The resulting radar measurements are used to determine the detection thresholds for all radar schemes, respectively, under a given probability of false alarm, denoted $P_{fa}$. In the second part, these thresholds are used to detect the existence of a target from the received echoes, which now include noise as well as  returns from both clutter and moving target.   
			The target detection procedure is based on the estimated parameters $ \lvert \hat{\bm \beta}\rvert $, as detailed in Subsection \ref{subsec:RangeDoppler}. 
			In the first part, we set $P_{fa}= 10^{-3}$ and perform $10^5$ Monte Carlo trials. While radar systems typically operate at lower values of $P_{fa}$, we use this value for computational reasons. It can be faithfully simulated under the given number of trials, and the selected value of $P_{fa}$ provides a characterization and understanding of the behavior of the considered radar schemes. A false alarm is proclaimed if any  nonzero-Doppler element in $ \lvert \hat{\bm \beta}\rvert $  exceeds certain threshold $ T_h $, i.e., $\max_{i \in \mathcal{C}^c} \lvert \hat{ \beta}_i\rvert > T_h $. 
			In the second part, we execute $200$ Monte Carlo trials.  A successful detection is proclaimed if the estimated intensity of the moving target is larger than the threshold, $\lvert \hat{ \beta}_i\rvert > T_h $, $ i \in \mathcal{C}^c $. 
			To evaluate the influence of missing observation caused by jamming, we set the survival rate $ u = 0.7 $ and compare the resulting probability of detection, $ P_d $ curves with those of the full observation cases in Fig. \ref{fig:pd}. Note that the detection thresholds for these $ P_d $ curves are calculated individually.
		}% revise
	} %Copy detection1	

	\begin{figure}	
		\centerline{\includegraphics[width=\myfiguresize in] {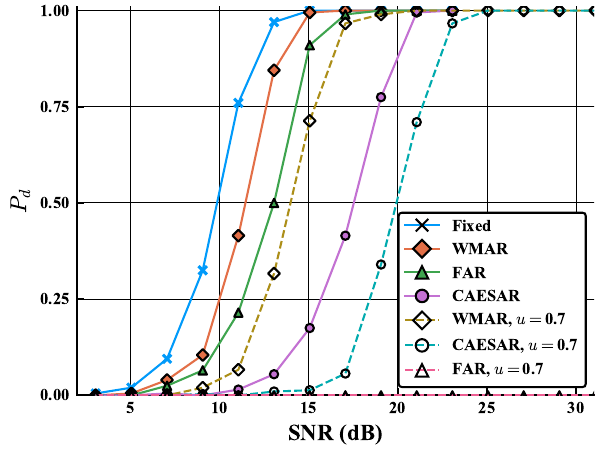}}
		\vspace{-0.2cm}
		\caption{\Revise{Detection probabilities $ P_d $ versus \ac{snr}. The label ``Fixed" represents the fixed frequency radar, and the $ P_d $  of \ac{far} with $ u=0.7 $ are zeros in the tested scenarios. }}
		\label{fig:pd}
	\end{figure} 
	
	{\Revise{As shown in Fig. \ref{fig:pd}, the fixed frequency radar has higher detection probabilities than the counterparts of frequency agile schemes. The advantage of fixed frequency radar stems from the property of its observation matrix $ \bm \Phi \in \mathbb{C}^{KN \times MN} $, where $ K = M = 1 $ and $ \bm \Phi $ becomes an orthogonal matrix, benefiting the Doppler reconstruction performance of \ac{cs} methods. While in the frequency agile schemes, generally it holds that $ M>K $, resulting in an incomplete observation matrix and degradation of clutter/target recovery performance. However, we  note that the fixed frequency scheme is vulnerable in a jamming environment. 
			In the full observation cases, \ac{wmar} outperforms \ac{far} because of the increased number of observations. Though \ac{caesar} has identical number of observations with \ac{wmar} after receive beamforming, the $ P_d $ values of \ac{caesar} are less than those achieved by \ac{wmar} with a \ac{snr} gap of approximately 6 dB. This follows since \ac{caesar} suffers from an antenna gain loss of $ K^2 $ as discussed in Subsection \ref{subsec:Compare} (i.e., 6 dB since $ K =2 $). When some of the observations are missing due to jamming, the detection probabilities are affected. Both  \ac{wmar} and \ac{caesar} suffer from an \ac{snr} loss of approximately 3 dB, while \ac{far} fails to detect any moving target in the scenarios under test. When \ac{far} is lacking in radar observations, the mutual coherence property of its observation matrix $ \bm \Phi_{\ast} $ becomes degraded, leading to many spurious peaks of high intensities in the recovery results $ \lvert \hat{ \bm \beta} \rvert $. These spurious peaks significantly increase the detection threshold, thus operating under a fixed $ P_{fa} $ of $ 10^{-3} $ results in notably reduced detection probability $ P_d $.}
		 
			\Revise{To summarize, we find from the simulation results that 1) the proposed multi-tones schemes (\ac{wmar} and \ac{caesar}) enhance the immunity against missing data over the single-tone \ac{far}, and 2) \ac{wmar} outperforms \ac{caesar} due to its higher antenna gain, which comes at the cost of increased  instantaneous bandwidth.}%revise
	}%Copy detection2 
	
	%-------------------------------------------------
	% Subsection: Accuracy and Resolution
	%-------------------------------------------------
	\subsection{Accuracy and Resolution}
	\label{subsec:accuracy}
	{\Revise{Here, we compare the range, Doppler and angle estimate results under different range-Doppler grid points. To this aim, two sets of range-Doppler grid points are tested: one uses the standard grid as mentioned in Subsection \ref{subsec:RangeDoppler}, where the intervals of consecutive range and Doppler grid points are $ \Delta_{\tilde{r}} = \frac{2 \pi}{M} $ and $ \Delta_{\tilde{v}} = \frac{2 \pi}{N} $, respectively; The latter uses a denser grid, setting $ \Delta_{\tilde{r}} = \frac{2 \pi}{2M} $ and $ \Delta_{\tilde{v}} = \frac{2 \pi}{2N} $, and the consequent simulation results are denoted with label ``-2", e.g., \ac{caesar}-2. The number of scattering points is $ S = 1 $ without clutter. The normalized range-Doppler parameter, $ (\tilde{r}, \tilde{v} )$, of the scattering point is uniformly, randomly set over $ [0,2\pi)^2 $, and the angle is randomly set within the beam $ \vartheta \in \varTheta $. Under this setting, the ground truth of the range-Doppler parameter may be off the grid, which leads to inevitable estimation error. The \ac{cs} method  applied for range-Doppler reconstruction is based on  \eqref{eq:l1noisy}, and we estimate range-Doppler, denoted by $  (\hat{\tilde{r}},\hat{ \tilde{v}} )$, from the index of the element with maximum magnitude in $ \hat{ \bm \beta} $. We then use \ac{rmse} as the metric of accuracy, defined by $\sqrt{{\rm E}[ (\tilde{r} - \hat{\tilde{r}})^2 ]}$, taking normalized range as an example. The remaining settings are the same as those used in Subsection \ref{subsec:detect}. We run 500 Monte Carlo trials and the range, Doppler and angle accuracy results are shown in Figs. 
			\ref{fig:range_accuracy}, \ref{fig:velocity_accuracy} and \ref{fig:angle_accuracy}, respectively. 
		}
		
		\Revise{
			As expected, the \ac{rmse}s become lower when we increase \ac{snr}, while we observe in Figs. \ref{fig:range_accuracy} and \ref{fig:velocity_accuracy} that the \ac{rmse}s of range and Doppler estimates reach error floors as the \ac{snr} increases. The error floors depend on the grid intervals  $ \Delta_{\tilde{r}}  $ or $ \Delta_{\tilde{v}}  $, and denser grid points lead to lower error floors. The results also reveal that \ac{wmar} and \ac{far} have similar accuracy performance, while \ac{caesar} has an \ac{snr} loss of 6 dB in moderate \ac{snr} levels because of its lower antenna gain. In high \ac{snr} scenarios, the \ac{rmse}s of \ac{caesar} also reach the error floor.
		} 
	} %Copy
	\begin{figure}	
		\centerline{\includegraphics[width=\myfiguresize in] {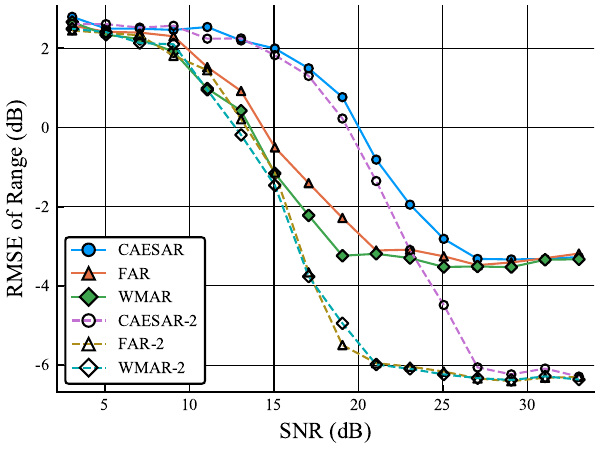}}
		\vspace{-0.3cm}
		\caption{\Revise{Range accuracy of range-Doppler reconstruction results.}}
		\label{fig:range_accuracy}
	\end{figure} 	
	
	\begin{figure}	
		\centerline{\includegraphics[width=\myfiguresize in] {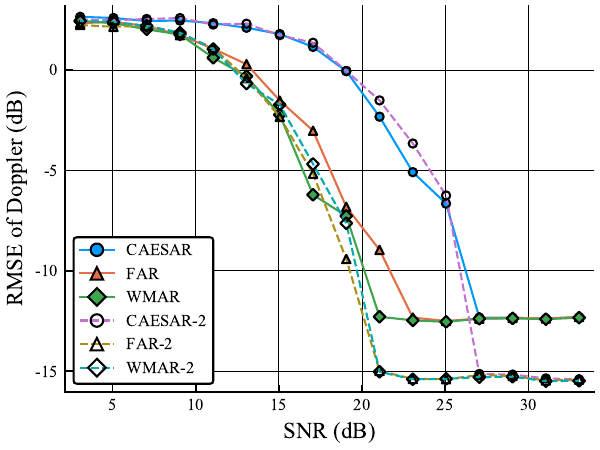}}
		\vspace{-0.2cm}
		\caption{\Revise{Doppler accuracy of range-Doppler reconstruction results.} }
		\label{fig:velocity_accuracy}
	\end{figure} 
	
	\begin{figure}	
		\centerline{\includegraphics[width=\myfiguresize in] {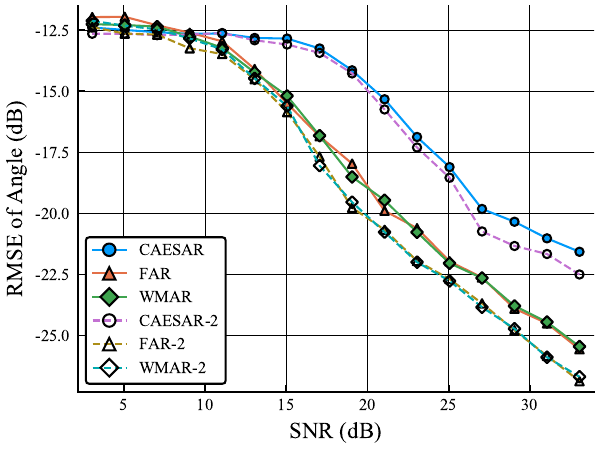}}
		\vspace{-0.2cm}
		\caption{\Revise{Angle accuracy versus \ac{snr}.}}
		\label{fig:angle_accuracy}
	\end{figure} 
	
	%%---------------------Resolution-------------------------------------------------------
	{\Revise{We next examine the ability of \ac{caesar}, \ac{far} and \ac{wmar} in separating closely spaced scattering points, i.e., obtainable resolution. In the simulations, we use dense grid points with intervals $ \Delta_{\tilde{r}} = \frac{2 \pi}{5M} $ and $ \Delta_{\tilde{v}} = \frac{2 \pi}{5N} $. We consider two closely spaced scattering points, of which angles are set $ \vartheta = 0 $ and range-Doppler parameters are on the grid. Particularly, we fix the range-Doppler parameter of one scattering point $ (\tilde{r}_1,\tilde{v}_1) = (0,0) $ and change the counterpart of the other scattering from $[\Delta_{\tilde{r}}, \frac{2 \pi}{M}] \times  [\Delta_{\tilde{v}}, \frac{2 \pi}{N}] $, such that the range/Doppler separation ($ \Delta_{\mathcal{R}} $/$ \Delta_{\mathcal{V}} $) between two points are changed. In this experiment, we disregard noise and the scattering intensities are both $ |\beta_1| = |\beta_2| = 1 $ with random phase. We use \eqref{eq:l1} for range-Doppler recovery, and the two most dominant elements in the estimate $ \hat{ \bm \beta} $ are regarded as scattering points. The indices of these two elements are compared with the corresponding ground truth, and a successful recovery (also referred to a hit) is proclaimed if both indices are correct. The number of Monte Carlo trails are 500. The achievable hit rate results versus separation between scattering points are shown in Fig. \ref{fig:resolution}. The results demonstrate that all the frequency agile schemes, \ac{caesar}, \ac{far} and \ac{wmar} have close performance in resolution, while the hit rates of \ac{caesar} and \ac{wmar} are slightly higher than those of \ac{far}, because they have more observations and the measurement matrix $ \bm \Phi $ has better coherence property.}}% copy resolution
	\begin{figure}	
		\centerline{\includegraphics[width=\myfiguresize in] {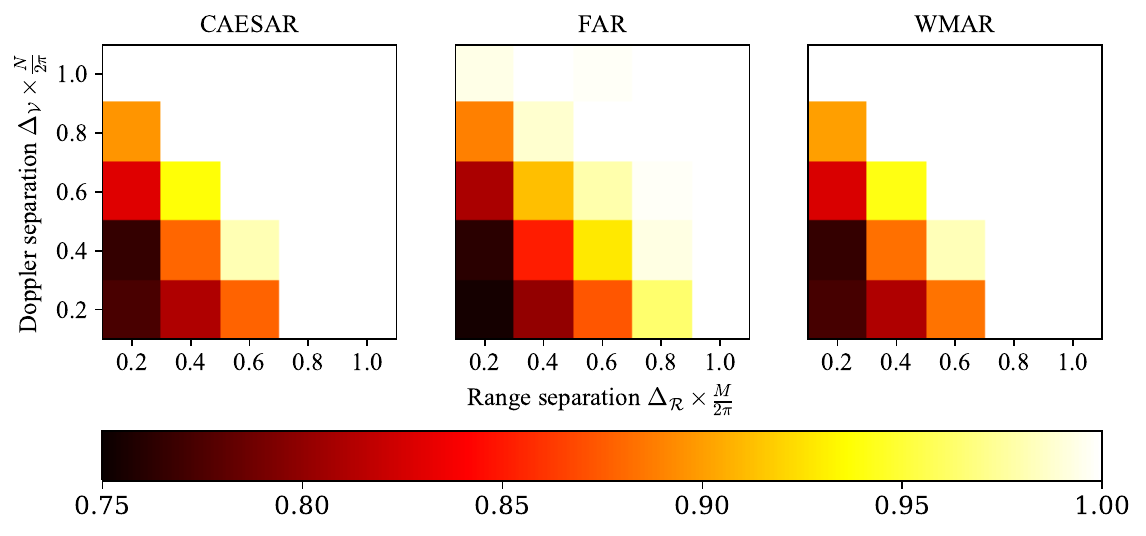}}
		\vspace{-0.2cm}
		\caption{\Revise{Hit rates of separating closely spaced scattering points.}}
		\label{fig:resolution}
	\end{figure}

	%-------------------------------------------------
	% Subsection: 
	%-------------------------------------------------
	\vspace{-0.2cm}
	\subsection{Reconstruction of Multiple Scattering Points}
	\label{subsec:hitrate}
	\vspace{-0.1cm}
	{\Revise{In this subsection, we evaluate the proposed radar schemes in recovering a set of $ S $ scattering points in noiseless and noisy setups. In both scenarios, we use the standard grid points with grid intervals $ (\Delta_{\tilde{r}},\Delta_{\tilde{v}}) = (2\pi/M,2\pi/N) $. The range-Doppler parameters of scattering points are randomly selected from the grid points, angle parameters are randomly set from the continuous set $ \varTheta $, and scattering intensities are all set to unity. We apply \ac{cs} methods for range-Doppler recovery, and the indices of  $ S $ most significant entries in $ \hat{ \bm \beta} $ are regarded as elements of the estimated support set. 
			Hit rates are applied as performance metric, and a hit is proclaimed if the obtained support set is identical to the ground truth, which means all the range-Doppler parameters are reconstructed correctly.} 
		
		In the \Revise{noiseless} experiment, we simulate different numbers of recoverable scattering points, $S$. 
		\Revise{We set the  survival rate $u =0.4$ for jamming environments.}  
		The resulting hit rates versus $S \in \{1,\dots,N-1\}$ are depicted in Fig.~\ref{fig:full_noiseless_hitrate}. As expected, the hit rates decrease as $S$ increases. The performance of \ac{caesar} is within a very small gap of that achievable using \ac{wmar}, because \ac{caesar} and \ac{wmar} use the same amount of transmitted frequencies $K$, and the number of beamformed measurements is also the same. Hit rates of \ac{caesar} and \ac{wmar} exceed that of \ac{far} significantly. This gain stems from the fact that transmitting multi-carriers in each pulse of \ac{caesar} and \ac{wmar} increases the number of observations, and thus raises the number of recoverable scattering points.}%Copy
	\begin{figure}
		\centering
		\includegraphics[width=\myfiguresize in]{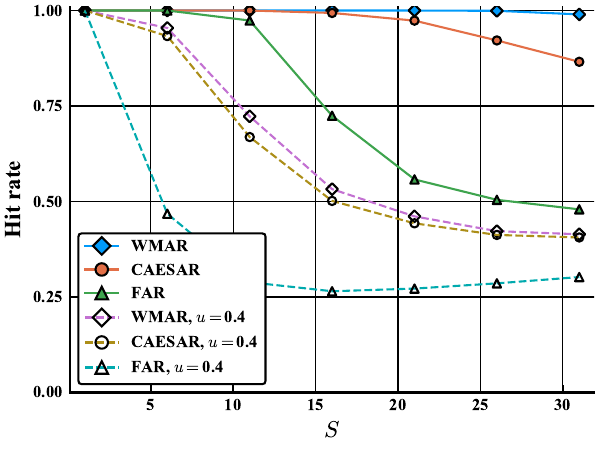}
		\vspace{-0.3cm}
		\caption{Range-Doppler recovery  versus $S$, noiseless setting.}
		\vspace{-0.4cm}
		\label{fig:full_noiseless_hitrate}
	\end{figure}
	\par
	{\Revise{We then consider the noisy case, and compare the range-Doppler recovery performance versus \ac{snr}, which is changed by varying $\sigma^2$. We set $S = 10$, and we let $ u = 0.4 $ for the jamming environment}. The hit rates of the range-Doppler parameters are depicted in Fig.~\ref{fig:noisy_hitrate}.}%\Verify{I think that plots for noisy cases are clearer when depicted versus one over the noise variance, i.e., as the x axis increase the \ac{snr} increase, and not vice versa. It is completely up to you though.}
	\begin{figure}
		\centering
		\includegraphics[width=\myfiguresize in]{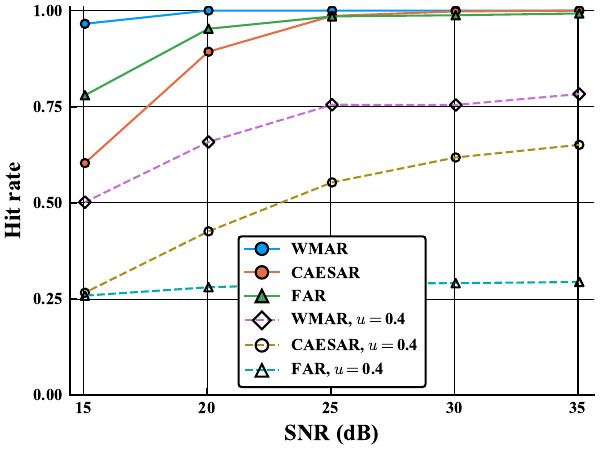}
		\vspace{-0.3cm}
		\caption{Range-Doppler recovery  versus \ac{snr}.}
		\vspace{-0.4cm}
		\label{fig:noisy_hitrate}
	\end{figure}
	
	{Observing Fig.~\ref{fig:noisy_hitrate}, we note that, as expected, \ac{wmar} achieves the best performance in range-Doppler reconstruction. While \ac{wmar} and \ac{caesar} have the same number of observations, \ac{caesar} has a lower antenna gain as noted in Subsection \ref{subsec:Compare}, which results in its degraded performance compared to \ac{wmar}. In \Revise{the full observation case with high \ac{snr}s, i.e., \ac{snr} $ \ge 25$ dB}, \ac{caesar} has higher hit rates than \ac{far} due to the advantage of increased number of transmitted frequencies, \Revise{while in low \ac{snr}s of less than 20 dB, }  \ac{far} exceeds \ac{caesar} owing to its higher antenna gain. 
		In the \Revise{jamming} scenario,  \ac{caesar} \Revise{outperforms \ac{far}}, and that \ac{far} almost fails to reconstruct scattering points (with hit rates \Revise{around 0.25}). The superiority of \ac{wmar}/\ac{caesar} over \ac{far} demonstrates the advantage of the proposed multi-carrier waveforms. %Comparing \ac{wmar} and \ac{caesar} shows the tradeoff induced by array allocation in \ac{caesar}, which allows to trade \Revise{instantaneous bandwidth} for a controllable loss in antenna gain.
		\par
		From the experimental results in \Revise{Subsections \ref{subsec:detect} - \ref{subsec:hitrate}}, we find that the multi-carrier signals used by \ac{caesar} and \ac{wmar} significantly enhance range-Doppler reconstruction performance over the monotone waveform in \ac{far}. The advantage becomes more distinct in jamming environments, where some radar measurements are invalid. In reasonably high \ac{snr} scenarios, the \Revise{reconstruction performance} of \ac{caesar}, which uses \Revise{narrowband} constant modulus waveforms \Revise{for each antenna element}, approach those of \ac{wmar}, which uses \Revise{instantaneously} wideband waveforms.} %In addition, \ac{caesar} uses constant modulus waveforms, and avoids the envelope fluctuation problems in \ac{wmar}, leading to enhanced power efficiency. 
	
	%--------------------------------------------------------------------------------------------------%--------------------------------------------------------------------------------------------------
	% Subsection: mutual inteference
	%--------------------------------------------------------------------------------------------------%--------------------------------------------------------------------------------------------------
	\subsection{\Revise{Mutual Interference}}
\label{subsec:mi}
{\Revise{One of the main advantages of frequency agile transmission is its relatively low level of mutual interference, which implies that multiple transmitters can coexist in dense environments. To demonstrate this property of the proposed radar schemes, which all utilize some level of frequency agility, we next evaluate the unintended mutual interference of closely placed radars transmitting the same waveform pattern. We compare the frequency agile schemes of \ac{far}, \ac{wmar} and \ac{caesar}, with an instantaneous wideband radar which transmits all subbands simultaneously.
		In the simulation, we consider a scenario with $6$ radars operating independently. Mutual interference occurs if a reference radar is receiving echoes while another radar is transmitting  at the same subcarriers with their antenna beams directed towards each other. In this case the echoes of the reference radar at the conflicted subcarriers are corrupted. 
		The level of mutual interference is measured by the average number of uncorrupted subcarriers, denoted $K_{\rm u}$.}
	
	\Revise{We use $\mathbb{P}_{\rm Int}$ to represent the probability that one radar may interfere the reference radar, i.e., that it is radiating during the reception period of the reference radar and their beams are directed towards each other. The number of subcarriers transmitted in each pulse varies from $K = 1$ to $K = 4$, where $K = 1$ represents the \ac{far} while $K = 4$ represents the radar using full bandwidth. As \ac{wmar} and \ac{caesar} transmit the same number of subcarriers for a specific $K$, their performance on mutual interference is the same. To demonstrate the mutual interference intensity versus interference probabilities, we simulate $10^{6}$ Monte Carlo trials for each interference probability and calculate the average number of uncorrupted subcarriers as shown in Fig.~\ref{fig:MutualInterference}. 
		From the results, we observe that, as expected, when the interference probability  is small, e.g., less than $0.2$, radar systems transmitting more subcarriers are capable of effectively utilizing their bandwidth reliably. However, as the probability of interference grows, wideband radar induce severe mutual interference, resulting in a negligible average number of uncorrupted subcarriers for $\mathbb{P}_{\rm Int} > 0.6$. The frequency agile schemes, such as \ac{far} and \ac{wmar}/\ac{caesar} operating with $K=2$, are still capable of reliably utilizing a notable portion of their bandwidth in the presence of such high interference. These results  indicate that the less frequency agile the scheme is, the severer the mutual interference becomes in the  high interference probability regime.}}

\begin{figure}
	\centerline{\includegraphics[width= \myfiguresize in ]{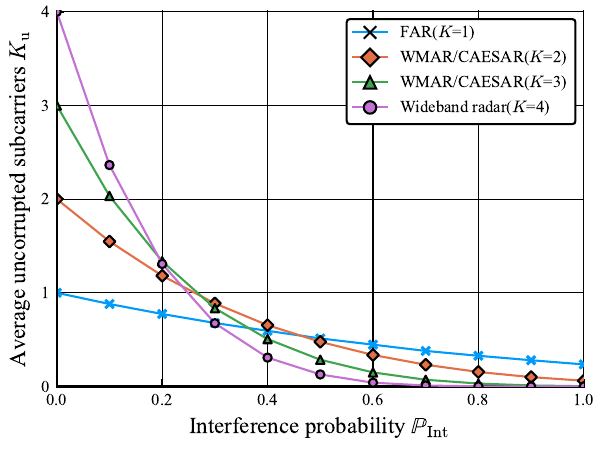}}
	%\vspace{-0.4cm}
	\caption{\Revise{The average number of uncorrupted subcarriers $K_{\rm u}$ versus $\mathbb{P}_{\rm Int}$ for radar schemes with different number of transmit subcarriers, which varies from $K=1$ to $K = 4$.} } 
	\label{fig:MutualInterference}
\end{figure}
	
	%--------------------------------------------------------------------------------------------------%--------------------------------------------------------------------------------------------------
	% Section: Conclusion
	%--------------------------------------------------------------------------------------------------%--------------------------------------------------------------------------------------------------
	\vspace{-0.2cm}
	\section{Conclusion}
	\label{sec:conclusion}
	\vspace{-0.1cm}
	In this work we developed two multi-carrier frequency agile schemes for phase array radars: \ac{wmar}, which uses wideband waveform\Revise{s}; and \ac{caesar}, which transmits monontone signals and introduces spatial agility. 
	We modeled  the received radar signal, and proposed an algorithm for target recovery. We then %discussed the pros and cons of the proposed schemes compared to previously proposed related radar methods, and
	characterized theoretical recovery guarantees. Our numerical results demonstrate that our proposed schemes achieve enhanced survivability in extreme electromagnetic environments. Furthermore, it is shown that \ac{caesar} is capable of achieving performance which approaches that of wideband radar, while utilizing narrowband transceivers. An additional benefit which follows from the introduction of frequency and spatial agility is the natural implementation of \ac{caesar} as a \ac{dfrc} system,  studied in a companion paper.
	
	%----------------------------------------------------------------------------------------
	%	APPENDICES
	%----------------------------------------------------------------------------------------
	\vspace{-0.2cm}
	\begin{appendix}

		\numberwithin{equation}{subsection}	
		%
		%-----------------------------------
		%	Proof of Beam pattern lemma
		%-----------------------------------
		\vspace{-0.2cm}
		\subsection{Proof of Lemma \ref{lem:BeamPattern}}
		\label{app:BeamPattern}
		\vspace{-0.1cm} 
		In the following we prove  \eqref{eq:signal_beamforming} for \ac{caesar}. The proof for \ac{wmar} follows similar arguments and is omitted for brevity.
		
		Substituting the definitions of $\bm w, \bm P$ and $\bm Y$ into \eqref{eq:receive_beamforming_k} yields
		\begin{eqnarray}
		Z_{k,n} &=& \sum \limits_{l = 0}^{L-1} w_l\left( \theta, \Omega_{n,k}\right) \left[\bm p(n,k)\right]_l \sum \limits_{s = 0}^{S-1}\tilde{\beta}_s e^{j \tilde{r}_s c_{n,k}}\notag \\
		&&\quad e^{j \tilde{v}_s n \zeta_{n,k}}e^{-j 2\pi \Omega_{n,k} ld {\sin \vartheta_s}/c}\rho_{\rm C} (n,k,\delta_{\vartheta_s})\notag \\
		&=& \sum \limits_{s = 0}^{S-1}\sum \limits_{l = 0}^{L-1} \left[\bm p(n,k)\right]_l \tilde{\beta}_s e^{j \tilde{r}_s c_{n,k}}e^{j \tilde{v}_s n \zeta_{n,k}}\notag \\
		&&\quad e^{-j 2\pi \Omega_{n,k} ld (\sin \vartheta_s-\sin \theta)/c}\rho_{\rm C} (n,k,\delta_{\vartheta_s})\notag \\
		&=& \sum \limits_{s = 0}^{S-1}\tilde{\beta}_s e^{j \tilde{r}_s c_{n,k}}e^{j \tilde{v}_s n \zeta_{n,k}}\rho_{\rm C}^2 (n,k,\delta_{\vartheta_s}).
		\label{eq:receive_beamforming_k2}
		\end{eqnarray}
		Recall that when $\delta_{\vartheta_s} \approx 0$ it holds that $\rho_{\rm C} (n,k,\delta_{\vartheta_s}) \approx L/K = \sqrt{\Gain}$. %Since we assume that all scattering points are located in the mainlobe of the beam, and that $\delta_{\vartheta_s} \approx 0$, we have $\rho_{\rm C} (n,k,\delta_{\vartheta_s}) \approx\sqrt{\Gain}$. 
		Then, \eqref{eq:receive_beamforming_k2} reduces to \eqref{eq:signal_beamforming}, proving the lemma. \qed
		
		%--------------------
		% App: Proof of the lemma
		%--------------------
		\vspace{-0.2cm}
		\subsection{Proof of Lemma \ref{lem:asymGaussian_miss}}
		\label{app:Proof2}
		\vspace{-0.1cm}
		We first prove \eqref{eq:Echindeltam} and \eqref{eq:sumvecchi2}, after which we address \eqref{eq:chibound2}. 
		
		\subsubsection{Proof of \eqref{eq:Echindeltam} and \eqref{eq:sumvecchi2}}
		For brevity, let $p = - \Delta_m$, $q = - \Delta_n$, $I_n = I_{\Lambda}(n)$, and $B = \binom{M}{K}$. We set $\binom{M}{K} = 0$ when $M\leq 0$ or $K<0$, and $\binom{M}{0} = 1$ when $M> 0$.
		
		%----Expectation, Full Pulses----------------
		\par
		We first compute ${\rm E}\left[\chi_n \right]
		= \frac{1}{K}{\rm E}\big[ I_n e^{ jqn}   \sum _{k = 0}^{K-1} e^{jpc_{n,k}}\big]$. 
		The expectation is taken over the indicator $I_n$ and frequency codes $c_{n,k}$.
		Since they are independent and  ${\rm E}\left[ I_{n} \right] = u $, it holds that
		\begin{equation} 
		{\rm E}\left[\chi_n \right]% =  \frac{e^{ jqn}{\rm E}\left[ I_{n} \right] }{K}{\rm E}\left[\sum \limits_{k = 0}^{K-1} e^{jpc_{n,k}}\right] 
		=\frac{ue^{ jqn} }{K}{\rm E}\left[\sum \limits_{k = 0}^{K-1} e^{jpc_{n,k}}\right].
		\label{eq:miss:expectation}
		\end{equation} 
		Since $K$ frequencies are selected uniformly (but not independently), it follows that
		\begin{equation}
		\label{eq:expectation_sum}
		{\rm E}\left[ \sum \limits_{k = 0}^{K-1} e^{jpc_{n,k}}\right] = \frac{1}{B}\sum \limits_{i=0}^{B-1} \sum \limits_{k = 0}^{K-1} e^{jpm_{i,k} },
		\end{equation}
		where $m_{i,k}$ denotes the $k$-th frequency in the $i$-th combination. Out of these $B$ combinations, there are $\binom{M-1}{K-1}$ that contain a given selection $m \in \mySet{M}$. Thus, we have that
		\begin{align} 
		\!\!\sum_{i=0}^{B-1} \!\sum \limits_{k = 0}^{K-1}\! e^{jpm_{i,k} }\!=\!\binom{M\!-\!1}{K\!-\!1}\!\sum \limits_{m \!= \!0}^{M\!-\!1}\! e^{jpm} 
		%		\notag \\
		%		
		\!=\!
		\frac{BK}{M}\!\sum \limits_{m \!=\! 0}^{M\!-\!1}\! e^{jpm}. 
		\label{eqn:Proof1a} 
		\end{align}
		Substituting \eqref{eqn:Proof1a} into \eqref{eq:expectation_sum} yields
		\begin{equation}
		\label{eq:expectation_sum2}
		\begin{split}
		{\rm E}\left[ \sum \limits_{k = 0}^{K-1} e^{jpc_{n,k}}\right] = \frac{K}{M}\sum \limits_{m = 0}^{M-1} e^{jpm}=\frac{K}{M} \frac{1-e^{jpM}}{1-e^{jp}}.
		\end{split}
		\end{equation}
		As $p\! \in \left\{\frac{2\pi m}{M}\right\}_{m \in \mySet{M}}$, it holds that ${\rm E}\big[\sum_{k = 0}^{K-1} e^{jpc_{n,k}}\big] = K$ if $p=0$ and zero otherwise. 
		%		because
		%		\begin{equation}
		%		\label{eq:chi2:singularpi}
		%		\lim_{x\rightarrow 0} \frac{1-e^{jCx}}{1-e^{jx}}  = C.
		%		\end{equation}
		%		
		Substituting this into \eqref{eq:miss:expectation}, we have
		\begin{equation}
		\label{eq:Echin}
		{\rm E}\left[\chi_n \right]
		=\begin{cases}
		u e^{jqn},\ \text{if\ }p = 0 ,\\
		0,\ \text{otherwise},
		\end{cases}
		\end{equation}
		which proves \eqref{eq:Echindeltam}.
		%----Variance----------------
		\par 
		To obtain $\!{\rm D}\left[ \chi_n\right]\!:=\!{\rm E}\big[\! \left| \chi_n \!-\! {\rm E}\! \left[ \chi_n \right] \right|^2 \big]$, we consider two cases, $p =0$ and $p \ne 0$.
		When $p =0$, we have $\chi_n = I_n e^{jqn}$ and 
		\begin{align} 
		&{\rm E}\left[ \left|
		I_{n}   e^{jqn} - u e^{jqn}\right|^2 \right]
		={\rm E}\left[ \left(
		I_{n}  - u \right)^2 \right]\notag \\
		&\qquad\stackrel{(a)}{=}  {\rm E}\left[  I_{n} + u^2 - 2 I_{n} u  \right]
		%\\		&
		\stackrel{(b)}{=}  u - u^2, 
		\label{eq:abs2:p0}
		\end{align}
		where $(a)$ holds since $I_n^2 = I_n$ and in $(b)$ we apply ${\rm E}\left[ I_n\right] = u$. 
		%--------- |\chi|^2----------------------------
		\par
		When $p \ne 0$, the random variable $\chi_n$ has zero mean, and its variance is given by
		\begin{align}
		{\rm E}\left[\left| \chi_n\right|^2\right] &=  {\rm E}\left[     \frac{I_{n}}{K} \left| \sum \limits_{k = 0}^{K-1} e^{jpc_{n,k} + jqn} \right|^2 \right] \notag  \\
		&=  \frac{1}{K^2} {\rm E}\left[  I_{n }\cdot  
		\sum \limits_{k = 0}^{K-1} \sum \limits_{k' = 0}^{K-1}e^{jp\left(c_{n,k} - c_{n, k'}\right)} \right]\notag \\
		&=\frac{u}{K^2} {\rm E}\left[  
		\sum \limits_{k = 0}^{K-1} \sum \limits_{k' = 0}^{K-1}e^{jp\left(c_{n,k} - c_{n, k'}\right)} \right], 
		\label{eq:miss:abs2:start} 
		\end{align}
		where we use $I_n^2 = I_n$. 
		To compute \eqref{eq:miss:abs2:start},  we note that
		\begin{align}
		&{\rm E}  \left[\sum \limits_{k = 0}^{K\! - \!1}\sum \limits_{k' = 0}^{K\! - \!1}e^{jpc_{n,k}\! - \!jpc_{n,k'}} \right] \notag \\
		&=\frac{\binom{M\! - \!1}{K\! - \!1}}{B} \sum \limits_{m = 0}^{M\! - \!1} e^{jp\cdot0}   + \frac{\binom{M\! - \!2}{K\! - \!2}}{B}\sum \limits_{m = 0}^{M\! - \!1}\sum \limits_{\substack{m' = 0,\\ m'\neq m}}^{M\! - \!1} e^{jp(m\! - \!m')} \notag \\
		&\stackrel{(a)}{=}\frac{\binom{M\! - \!1}{K\! - \!1}\! - \!\binom{M\! - \!2}{K\! - \!2}}{B} M
		+ \frac{\binom{M\! - \!2}{K\! - \!2}}{B} \sum \limits_{m = 0}^{M\! - \!1}\sum \limits_{m' = 0}^{M-1} e^{jp(m-m')},
		\label{eq:abs2:Eabsx2_3}  
		\end{align}
		where $(a)$ follows since $\sum\limits_{m = 0}^{M-1}\sum\limits_{{m' = 0, m'\neq m}}^{M-1} e^{jp(m-m')}$ in the second term can be replaced by
		$\sum \limits_{m = 0}^{M-1}\sum \limits_{m' = 0}^{M-1} e^{jp(m-m')}$ $-$ $\sum \limits_{m = 0}^{M-1} e^{0}$.
		From the derivation of \eqref{eq:Echin}, it holds that for $p \ne 0$ the second summand in \eqref{eq:abs2:Eabsx2_3} vanishes, resulting in
		\begin{equation}\label{eq:abs2:Eabsx2_4} 
		{\rm E}  \left[\sum \limits_{k = 0}^{K-1}\sum \limits_{k' = 0}^{K-1}e^{jpc_{n,k}-jpc_{n,k'}} \right] 
		=	\frac{(M-K)K}{M-1}. 
		\end{equation}
		Plugging (\ref{eq:abs2:Eabsx2_4}) into \eqref{eq:miss:abs2:start}, we obtain 
		\begin{equation}
		\label{eq:abs2:pn0}
		{\rm E}\left[  |\chi_n|^2\right] =
		\frac{M-K}{(M-1)K}u,\ \text{if\ } p \ne 0.
		\end{equation}
		\par
		
		Finally, to prove \eqref{eq:sumvecchi2}, we calculate $\sum_{n = 0}^{N-1}{\rm D}\left[\chi_n\right]$ for $p =0$ and $p \ne 0$.
		When $p = 0$, from \eqref{eq:abs2:p0}, we have that
		\begin{equation}
		\label{eq:sum:p0} 
		\sum_{n = 0}^{N-1}{\rm D}\left[\chi_n\right]
		=\sum_{n = 0}^{N-1} \left(u -u^2 \right) 
		=\left(u -u^2 \right)N. 
		\end{equation}
		When $p \ne 0$, it follows from \eqref{eq:abs2:pn0} that
		\begin{equation}
		\label{eq:sum:pn0}
		\sum_{n = 0}^{N-1}{\rm D}\left[\chi_n\right]
		=\frac{M-K}{(M-1)K}uN.
		\end{equation}
		Combining \eqref{eq:sum:p0} and \eqref{eq:sum:pn0} proves \eqref{eq:sumvecchi2}. 
		\qed

		\subsubsection{Proof of \eqref{eq:chibound2}}
		We again consider the two cases $p =0$ and $p \neq 0$ separately:
		When $p = 0$, it follows from \eqref{eq:Echin} that 
		\vspace{-0.1cm}
		\begin{align} 
		\left| \chi_n - {\rm E} \left[ \chi_n \right] \right|^2 
		&=  \left|\left( I_{n} - u \right) e^{jqn}\right|^2  =  \left( I_{n} - u \right)^2 \notag  \\
		&    =\begin{cases}
		(1-u)^2 \le 1,\ \text{if\ } I_n = 1,\\
		u^2 \le 1,\ \text{otherwise}.
		\label{eq:abschi:p0} 
		\end{cases} 
		\vspace{-0.1cm}
		\end{align}	
		When $p \ne 0$, $ \left| \chi_n - {\rm E} \left[ \chi_n \right] \right|^2  =   \left| \chi_n \right|^2$, which is not larger than 1 by definition of $\chi_n$ \eqref{eqn:Chindef}, thus proving \eqref{eq:chibound2}. 
		\qed

		%--------------------
		% App: Proof of the lemma
		%--------------------
		\vspace{-0.2cm}
		\subsection{Proof of Theorem \ref{thm:MIP}}
		\label{app:Proof3}
		\vspace{-0.1cm}
		By fixing some positive $\epsilon \le V$, setting $t = \sqrt{\frac{N}{8V}}\epsilon$ and %defining 
		\vspace{-0.1cm}
		\begin{equation}
		\label{eq:epsilonprime}
		\epsilon' := \frac{\sqrt{V}+\epsilon}{uN - \sqrt{\frac{N}{8V}}\epsilon } = \frac{\sqrt{V}+\epsilon}{uN - t },
		\vspace{-0.1cm}
		\end{equation}
		we have that for any $\left(\Delta_m, \Delta_n \right) \in \Xi$ 
		\vspace{-0.1cm}
		\begin{align} 
		\mathbb{P}\left(  \frac{|\chi|}{|\Lambda|} \ge \epsilon'  \right) 
		&\stackrel{(a)}{\leq} 	\mathbb{P}\left(\left| \chi \right| \ge \sqrt{V} + \epsilon \cup |\Lambda| \le uN - t  \right) \notag \\
		&\leq \mathbb{P}\left( \left| \chi \right| \ge \sqrt{V} \!+\! \epsilon \right) \!+\! \mathbb{P}\left( |\Lambda| \le uN \!- \!t \right)\notag\\
		&\stackrel{(b)}{\leq} 2e^{-\frac{\epsilon^2}{4V}}.
		\label{eq:barchibound} 
		\vspace{-0.1cm}
		\end{align}
		Here $(a)$ holds since the event $ \frac{|\chi|}{|\Lambda|} \ge \frac{\sqrt{V}+\epsilon}{uN - t }$ implies that at least one of the conditions $\left| \chi \right| \ge \sqrt{V} + \epsilon$ and $ |\Lambda| \le uN - t $ is satisfied; and $(b)$ follows from Corollary \ref{cor:Rayleigh} and Lemma \ref{lem:Lambdabound}.

		Using the bound \eqref{eq:barchibound} on the magnitude of the normalized correlation, we next bound the probability of  $\mu({\bm \Phi_{\ast}})$ to exceed some constant. By applying the union bound to \eqref{eqn:MIP2}, we have 
		\vspace{-0.1cm}
		\begin{align}
		\mathbb{P}\left( \mu({\bm \Phi_{\ast}}) \ge \epsilon'  \right)
		&\leq \sum \limits_{\left(\Delta_m, \Delta_n \right) \in \Xi} \mathbb{P}\left(\frac{|\chi\left(\Delta_m, \Delta_n \right)|}{|\Lambda|} >\epsilon' \right) \notag \\
		& \leq 2| \Xi| e^{-\frac{\epsilon^2}{4V}}. 
		\label{eq:pr_union} 
		\vspace{-0.1cm}
		\end{align}
		According to \eqref{eq:pr_union}, for any $\epsilon >0$, it holds that
		\vspace{-0.1cm}
		\begin{equation}
		\mathbb{P}\left(  \mu({\bm \Phi_{\ast}})  \leq \epsilon' \right) \ge 1- 2| \Xi| e^{-\frac{\epsilon^2}{4V}},
		\label{eq:tmp_Ne_bound}
		\vspace{-0.1cm}
		\end{equation}
		where $\epsilon'$ is obtained from $\epsilon$ via \eqref{eq:epsilonprime}. The right hand side of \eqref{eq:tmp_Ne_bound} is not smaller than $1-\delta$ when $\delta \ge 2| \Xi| e^{-\frac{\epsilon^2}{4V}}$, implying that $	\mathbb{P}\left(  \mu({\bm \Phi_{\ast}})  \leq \epsilon' \right)  \ge 1-\delta$ when $\epsilon$ satisfies 
		\vspace{-0.1cm}
		\begin{equation}
		\label{eq:epsilonbound}
		\epsilon \ge \sqrt{2V(\log 2|\Xi| - \log \delta)}. 
		\vspace{-0.1cm}
		\end{equation}	
		Finally, by \eqref{eq:epsilonprime}, fixing  $\epsilon'={1}/({2S-1})$ implies that
		\vspace{-0.1cm}
		\begin{equation} 
		S = \frac{uN - \sqrt{\frac{N}{8V}}\epsilon }{2\sqrt{V}+2\epsilon} + \frac{1}{2}
		= \frac{uN + \sqrt{N/8} }{2\sqrt{V}+2\epsilon} - \sqrt{ \frac{N}{32V}} + \frac{1}{2}.\label{eq:Sepsilon}
		\vspace{-0.1cm}
		\end{equation}
		Substituting  \eqref{eq:epsilonbound} into \eqref{eq:Sepsilon} proves \eqref{eq:Sbound}.
		\qed
		
	\end{appendix}
	% Can use something like this to put references on a page
	% by themselves when using endfloat and the captionsoff option.
	\ifCLASSOPTIONcaptionsoff
	\newpage
	\fi

	% trigger a \newpage just before the given reference
	% number - used to balance the columns on the last page
	% adjust value as needed - may need to be readjusted if
	% the document is modified later
	%\IEEEtriggeratref{8}
	% The "triggered" command can be changed if desired:
	%\IEEEtriggercmd{\enlargethispage{-5in}}
	
	% references section
	
	% can use a bibliography generated by BibTeX as a .bbl file
	% BibTeX documentation can be easily obtained at:
	% http://mirror.ctan.org/biblio/bibtex/contrib/doc/
	% The IEEEtran BibTeX style support page is at:
	% http://www.michaelshell.org/tex/ieeetran/bibtex/
	%\bibliographystyle{IEEEtran}
	% argument is your BibTeX string definitions and bibliography database(s)
	%\bibliography{IEEEabrv,../bib/paper}
	%
	% <OR> manually copy in the resultant .bbl file
	% set second argument of \begin to the number of references
	% (used to reserve space for the reference number labels box)
	\bibliographystyle{IEEEtran}
	\bibliography{CAESAR_R1}

	% You can push biographies down or up by placing
	% a \vfill before or after them. The appropriate
	% use of \vfill depends on what kind of text is
	% on the last page and whether or not the columns
	% are being equalized.
	
	%\vfill
	
	% Can be used to pull up biographies so that the bottom of the last one
	% is flush with the other column.
	%\enlargethispage{-5in}

	% that's all folks
\end{document}